\newtheorem{lem}{Lemma}
\newtheorem{ass}{Assumption}
\newtheorem{theorem}{Theorem}
\newtheorem{defn}{Definition}
\newtheorem{rem}{Remark}
\newtheorem{prop}{Proposition}
\newtheorem{cor}{Corollary}
\def\mb{\mathbf}
\def\mc{\mathcal}
\def\mb{\mathbf}
\def\mc{\mathcal}
\journal{European Journal of Control}
\begin{document}

\begin{frontmatter}

\title{ Distributed Allocation and Resource Scheduling Algorithms Resilient to Link Failure
}

\author[Sem]{Mohammadreza Doostmohammadian}
\affiliation[Sem]{Mechatronics Group, Faculty of Mechanical Engineering, Semnan University, Semnan, Iran, doost@semnan.ac.ir.}
\author[SP]{Sergio Pequito}
\affiliation[SP]{Department of Electrical and Computer Engineering and Institute for Systems and Robotics, Instituto Superior Tecnico, University of Lisbon, Portugal, sergio.pequito@tecnico.ulisboa.pt.}

\begin{abstract}
Distributed resource allocation (DRA) is fundamental to modern networked systems, spanning applications from economic dispatch in smart grids to CPU scheduling in data centers. Conventional DRA approaches require reliable communication, yet real-world networks frequently suffer from link failures, packet drops, and communication delays due to environmental conditions, network congestion, and security threats.

We introduce a novel resilient DRA algorithm that addresses these critical challenges, and our main contributions are as follows: (1) guaranteed constraint feasibility at all times, ensuring resource-demand balance even during algorithm termination or network disruption; (2) robust convergence despite sector-bound nonlinearities at nodes/links, accommodating practical constraints like quantization and saturation; and (3) optimal performance under merely uniformly-connected networks, eliminating the need for continuous connectivity.

Unlike existing approaches that require persistent network connectivity and provide only asymptotic feasibility, our graph-theoretic solution leverages network percolation theory to maintain performance during intermittent disconnections. This makes it particularly valuable for mobile multi-agent systems where nodes frequently move out of communication range. Theoretical analysis and simulations demonstrate that our algorithm converges to optimal solutions despite heterogeneous time delays and substantial link failures, significantly advancing the reliability of distributed resource allocation in practical network environments.
\end{abstract}

\begin{graphicalabstract}
	\includegraphics{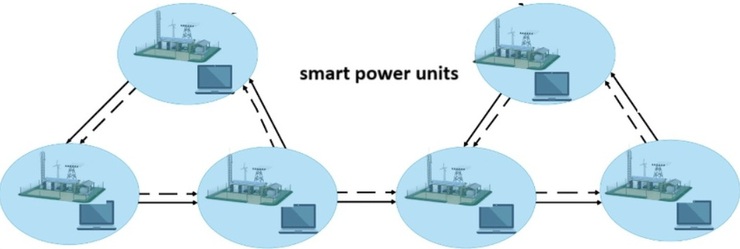}
\end{graphicalabstract}

\begin{highlights}
	\item Clearly defining the problem of distributed resource allocation as a formal mathematical optimization setup.
	\item Proposing a distributed resource allocation algorithm with optimal convergence over uniformly-connected networks while addressing all-time feasibility.
	\item Proving algorithm convergence and resilience under random link failure and time-delayed communication links
\end{highlights}

\begin{keyword}
	Distributed resource allocation \sep graph theory \sep time-delay \sep percolation theory\sep convex optimization 
\end{keyword}

\end{frontmatter}

\section{Introduction} \label{sec_intro}

Modern networked systems increasingly rely on distributed algorithms for resource allocation, yet these systems face significant reliability challenges when operating in environments where communication disruptions are common. Resilient Distributed Resource Allocation (DRA) has emerged as a critical research area focused on maintaining optimal performance despite network imperfections such as link failures, packet drops, time delays, and cyber-attacks \cite{pirani2023graph,ARC}.
Resilient DRA algorithms serve diverse applications, from CPU scheduling that optimally allocates computation loads among networked data centers \cite{cpu,oliva2022best}; coordinated coverage control distributes surveillance tasks among mobile sensors to minimize distance-based costs \cite{MSC09,telsang2022decentralized}; Task allocation for maximizing reliability of distributed computing systems \cite{kang2010task,yin2007task}; distributed energy management assigns power demand across generators to optimize generation costs \cite{zhou2022distributed2,cherukuri2016initialization}; and electric vehicle charging coordination balances grid stability while meeting charging needs \cite{shah2019optimization,falsone2020tracking}.

Despite advances in resilient algorithms across various distributed domains, a critical gap remains in resource allocation systems that can maintain both constraint feasibility and optimality under realistic network conditions. Current DRA approaches -- whether linear \cite{boyd2006optimal,zhang2020distributed,doan2017ccta,rogozin2022decentralized}, predefined-time \cite{lin2020predefined,wang2024predefined}, attack-resilient \cite{SHAO20206241}, or event-triggered \cite{cai2023event} -- largely overlook the practical challenges of communication latency and nonlinear agent dynamics. Even approaches that address directed network topologies \cite{zhou2022distributed,9398581,zhu2019distributed} remain vulnerable to link failures and depend on continuous network connectivity, an unrealistic assumption in many practical settings.

Communication disruptions are pervasive in real-world multi-agent systems through packet retransmission requirements, processing delays, and physical limitations. When DRA algorithms fail to address these realities, they risk violating resource-demand constraints and converging to suboptimal solutions, potentially triggering cascading instabilities or complete service disruptions. While consensus algorithms have made strides in addressing communication delays \cite{hadjicostis2013average}, existing delay-tolerant DRA approaches \cite{wang2019distributed,cdc_dtac} achieve constraint satisfaction only asymptotically and still depend on persistent network connectivity -- assumptions that break down in environments with intermittent communication or mobile agents.

To bridge this critical gap, we introduce a graph-theoretic framework specifically designed to handle link failures and packet drops in dynamic network environments. Our approach recognizes that real-world networks frequently experience topology changes due to long-distance communication challenges, environmental interference, and node mobility. Unlike conventional algorithms that require continuous connectivity, our solution tolerates networks that are merely uniformly connected, meaning they may temporarily fragment but regain connectivity within bounded time intervals.

Furthermore, we develop a delay-tolerant mechanism that maintains feasibility and optimal convergence despite heterogeneous time delays across communication links. Most significantly, our resilient solution guarantees all-time feasibility -- maintaining perfect balance between allocated resources and demand at every iteration. This critical property ensures that the algorithm can be safely terminated at any point without risking constraint violations, even during network disruptions or recovery phases.

In summary, our main contributions are as follows:

(i) We introduce a distributed resource allocation algorithm that operates over uniformly-connected networks, significantly advancing resilience to link failures, packet drops, and time delays without requiring continuous network connectivity;

(ii) Our algorithm achieves exact convergence to the optimizer despite sector-bound nonlinear mappings at both links and nodes, accommodating practical constraints such as log-scale quantization and clipping, a capability not previously addressed in distributed resource allocation literature;

(iii) The proposed approach maintains perfect balance between allocated resources and demand at all iterations (all-time feasibility), even with nonlinear constraints on nodes/links. This critical property enables safe algorithm termination at any point without risking resource-demand imbalance or service disruption, a feature absent in dual-based formulations and ADMM-based methods \cite{banjac2019decentralized,jiang2022distributed,wang2020dual,Carli_ADMM,falsone2020tracking}; and

(iv) Our solution guarantees optimality, all-time feasibility, and convergence despite heterogeneous time delays and packet drops across the network. This comprehensive resilience ensures the algorithm functions effectively even when some links experience significant delays or complete failure, advancing beyond existing delay-tolerant approaches \cite{wang2019distributed,cdc_dtac} that cannot handle sector-bound nonlinearity while maintaining resource-demand feasibility.

The rest of the paper is organized as follows. Section~\ref{sec_prob} states the DRA problem and its applications. Section~\ref{sec_pre} presents the preliminaries on network science and graph theory. Section~\ref{sec_alg} provides our main distributed algorithm to solve the DRA problem for both delay-free and delayed cases. Section~\ref{sec_conv} discusses the general convergence analysis, also in the presence of link failure and time delay. Section~\ref{sec_sim} presents the simulations and Section~\ref{sec_con} concludes the paper.

\section{The Resilient DRA Problem} \label{sec_prob}

Consider a multi-agent network where agents (i.e., nodes) have some (possibly nonlinear) dynamics and have to exchange information to find the solution to an optimization problem where some constraints need to be satisfied across the multi-agent network while the exchanged information is local. The network of agents is modelled as a graph of $n$ nodes, denoted by $\mathcal{G} = \{\mathcal{V},\mathcal{E}\}$ that might be time-varying, i.e., for a fixed set of nodes $\mathcal{V}$ the link set $\mathcal{E}$ may change over time. Its associated adjacency matrix, denoted by $W$, is symmetric and entry $W_{ij} > 0$ denotes the weighting factor on the information shared over link $(j, i) \in \mathcal{E}$. $W_{ij}=0$ implies that $(j, i) \notin \mathcal{E}$. Define $\mathcal{N}_i = \{j | (j,i) \in \mathcal{E}\}$ as the set of neighbours of the node/agent $i$ for data-sharing.

The DRA problem in its most general representation is in the form of coupling-constraint distributed optimization. The objective function represents the sum of local cost (or loss) at agents, each having access to a certain amount of resources $z_i$. The weighted sum of the shared resources is constant~$b$. The problem is to minimize the objective by assigning the proper amount of resources to the agents subject to the resource-demand equality constraint as follows
\begin{align} \label{eq_dra0}
\min_{\mathbf{z}}
~~ & H(\mathbf{z}) = \sum_{i=1}^{n} h_i(z_i),\\ \nonumber
&\text{s.t.} ~  \sum_{i=1}^n a_iz_i  = b.
\end{align}

By a simple change of variable $a_iz_i=x_i$, the problem changes into the following standard form
\begin{align} \label{eq_dra}
\min_{\mathbf{x}}
~~ & F(\mathbf{x}) = \sum_{i=1}^{n} f_i(x_i),\\ \nonumber
&\text{s.t.} ~  \sum_{i=1}^n x_i  = b.
\end{align}

The function $f_i: \mathbb{R} \mapsto \mathbb{R}$ is strictly convex and $2u$-smooth, i.e., it satisfies $0<\frac{d^2f_i}{dx_i^2} < 2u$. In many applications, the constraint $\sum_{i=1}^n x_i  = b$ is required to be held at all iterations under the distributed solution. This is referred to as \emph{all-time feasibility} and ensures that the balance between allocated states $\sum_{i=1}^n x_i$ and demand $b$ holds at any termination time of the algorithm with no service disruption.

The states of the agents might be subject to local box constraints which represent simple upper-bound and lower-bound on the states in the form
\begin{align} \label{eq_box}
	\min_{\mathbf{x}}
	~~ & F(\mathbf{x}) = \sum_{i=1}^{n} f_i(x_i),\\ \nonumber
	&\text{s.t.} ~  \sum_{i=1}^n x_i  = b, \\ \nonumber
	&m_i \leq x_i \leq M_i.
\end{align}

The local constraints in \eqref{eq_box} can be efficiently addressed by incorporating barrier functions or additive penalty terms into the objective function. This approach introduces a controllable optimality gap as discussed in \cite{nesterov1998introductory,bertsekas2003convex}.

A common penalty formulation is $f^\sigma(x_i) = \sigma ([x_i-M_i]^+ + [m_i-x_i]^+)$, where $[u]^+ = (\max\{0,u\})^m$ with $m \in \mathbb{N}_{\geq 2}$. The parameters $m$ and $\sigma$ should be sufficiently large to ensure adequate penalization of constraint violations, thereby guiding the solution toward the feasible region.

For applications requiring smoother penalty transitions, the logarithmic approximation ${f^\mu(u)=\frac{1}{\mu}\log (1+\exp(\mu u))}$ offers an attractive alternative, where $u$ represents either $[x_i-M_i]$ or $[m_i-x_i]$. This formulation converges to $\max\{u,0\}$ as $\mu$ increases, providing arbitrarily close approximations to the exact constraint with appropriate parameter selection. See \cite{nesterov1998introductory} for a comprehensive analysis of these penalty methods and their convergence properties.

To facilitate reading and ease the exposition that follows, all required notation is summarized in Table~\ref{tab_notation}.

\begin{table} [hbpt!]
	\centering
		\caption{List of symbols}\
		\scriptsize\setlength{\tabcolsep}{3pt}
		\begin{tabular}{|c|c|}
			\hline 		\hline
			$\mb{x}$, $\mb{z}$ & column state vectors \\ \hline
			$\mb{1}_n$, $\mb{0}_n$ & column vector of ones and zeros of size $n$ \\
			\hline
			$\mb{x}^*$ & optimal state \\ \hline
			${\mc{S}}_b$ & feasibility set (associated with $b$) \\
			\hline
			$x_i$ & state of node $i$ \\\hline
			$k$ & discrete time index \\
			\hline
			$F(\cdot)$ & global cost function \\
			\hline $\mc{G}$, $\mc{V}$, $\mc{E}$ & network, set of nodes, set of links \\
			\hline
			$\nabla F(\cdot)$  & gradient of $F(\cdot)$ \\ \hline
			$\overline{d_G}$ & network diameter \\
			\hline
			$f_i(\cdot)$ & local cost function  \\
			\hline $W$, $L$ & weight matrix, Laplacian matrix \\
			\hline
			$\partial f_i(\cdot)$ & first derivative of $f_i(\cdot)$  \\  \hline $\lambda_i$ & $i$th eigenvalue of $L$ \\
			\hline
			$\tau_{ij}$ & time-delay of link $(i,j)$ \\		
			\hline
			$\mc{N}_i$ & neighbors of node $i$ \\ \hline  $\overline{\tau}$ & global upper-bound on the time-delay \\
			\hline
			$g_n,g_l$ & nonlinear mapping at the nodes/links \\ \hline
			$\kappa_n,\kappa_l$ & lower sector-bound on $g_n,g_l$ \\ \hline	
			$\mc{K}_n,\mc{K}_l$ & upper sector-bound on $g_n,g_l$ \\ \hline
			$u$ & upper bound on second gradient of $f_i$ \\ \hline	
			$\mc{I}(\cdot)$ & indicator function \\
			\hline		
			$n$ & number of agents/nodes \\ \hline $\zeta $   & dispersion parameter vector \\
			\hline $p_c,p_l$   & percolation threshold, link failure probability \\
			\hline		
			\hline
		\end{tabular} \normalsize
		\label{tab_notation}
\end{table}

\subsection{Resilient DRA Problem Formulation}

Hereafter, we address the challenge of distributed resource allocation over unreliable networks characterized by communication disruptions. Unlike conventional approaches that assume reliable communication, we explicitly model and mitigate the effects of random packet drops and link failures that occur in practical networked systems.

Our focus on unreliable networks is motivated by several key considerations:

\begin{enumerate}
\item \textbf{Random Structure of Real Networks}: Many real-world networks exhibit inherently random characteristics \cite{barabasi}. By modelling networks as random structures, we can better analyze their interconnection properties and behaviour under varying conditions.

\item \textbf{Unpredictable Link Failures}: In practical deployments, link failures occur unpredictably due to hardware malfunctions, environmental interference, or congestion. Our random network model allows rigorous evaluation of algorithm resilience against such failures.

\item \textbf{Scalability Analysis}: Random network models facilitate scalability analysis as system complexity increases, providing insights into how distributed algorithms perform with expanding network size and connectivity changes.
\end{enumerate}

Our resilient DRA formulation extends the standard problem to incorporate:

\begin{enumerate}
\item \textbf{Time-Varying Connectivity}: We model $\mathcal{G}(k) = \{\mathcal{V},\mathcal{E}(k)\}$ as a dynamic graph where the edge set $\mathcal{E}(k)$ changes over time due to communication disruptions.

\item \textbf{Probabilistic Link Failures}: Each communication link $(i,j) \in \mathcal{E}$ has a probability $p_l$ of failure at any time step, representing packet drops, congestion, or physical disconnection.

\item \textbf{Nonlinear Communication Effects}: We account for practical nonlinearities in information exchange through sector-bound mappings at both nodes and links $g_n: \mathbb{R} \mapsto \mathbb{R}$, and  $g_l: \mathbb{R} \mapsto \mathbb{R}$,
where $\kappa_n z < g_n(z) < K_n z$ and $\kappa_l z < g_l(z) < K_l z$ represent the sector bounds on these nonlinear mappings.

\item \textbf{Heterogeneous Time Delays}: Communication between agents may experience varying delays $\tau_{ij}(k) \leq \bar{\tau}$ that are time-varying, random, and potentially different for each link. These delays are incorporated into our agent dynamics as
{\footnotesize
\begin{align}
\hspace{-1cm}x_i(k+1) = x_i(k) - \eta_{\tau} \sum_{j \in \mathcal{N}_i} \sum_{r=0}^{\bar{\tau}} W_{ij} g_n\left(g_l(\partial f_i(k-r)) - g_l(\partial f_j(k-r))\right) I_{k-r,ij}(r),
\end{align}
}
where $I_{k-r,ij}(r)$ is an indicator function defined as
\begin{align}
I_{k,ij}(\tau) =
\begin{cases}
1, & \text{if } \tau_{ij}(k) = \tau, \\
0, & \text{otherwise}.
\end{cases}
\end{align}
\end{enumerate}

This comprehensive model captures the realistic challenges of distributed resource allocation in practice, where network disruptions can occur due to hardware malfunctions, interference, congestion, or buffer overflow. Our solution approach, based on the percolation theory discussed in Section~\ref{sec_perc}, provides rigorous guarantees for algorithm performance despite these communication challenges.


\section{Preliminaries on Network Science} \label{sec_pre}

This section establishes the mathematical and theoretical foundations necessary for our resilient distributed resource allocation approach. We begin by introducing essential concepts from algebraic graph theory in Section~\ref{subsec:agt}, where we examine the properties of Laplacian matrices and their eigenspectra.  Next, in Section~\ref{sec_perc}, we introduce network percolation theory, which provides the analytical framework for characterizing network resilience under random link failures. These mathematical preliminaries form the essential basis for both our algorithm design in Section~\ref{sec_alg} and the rigorous convergence analysis presented in Section~\ref{sec_conv}, particularly when proving algorithm resilience to link failures and communication delays.

\subsection{Algebraic Graph Theory}\label{subsec:agt}
Next, we present essential graph-theoretic concepts relevant to our analysis. Given a multi-agent network with adjacency matrix $W$, we define the diagonal degree matrix $D := \text{diag}(\sum_{j=1}^n W_{ij})$ and the corresponding Laplacian matrix $L = D - W$. For a connected graph with symmetric $L$, the spectrum of the Laplacian contains exactly one zero eigenvalue, with $\mathbf{1}_n$ (the all-ones vector) as its associated eigenvector. When the network becomes disconnected or fragmented into islands, the Laplacian matrix exhibits multiple zero eigenvalues, with the number corresponding to the count of connected components. The spectral properties of the Laplacian matrix $L$ fundamentally characterize the consensus behaviour and information diffusion across the network $\mathcal G$. The following lemma, derived from the Courant-Fischer theorem, provides crucial insight into quantifying consensus \emph{disagreement} in multi-agent networks~\cite{SensNets:Olfati04}.

\begin{lem} \label{lem_xLy}
\cite{SensNets:Olfati04} Given a symmetric Laplacian $L$ and vector $\mb{x} \in \mathbb{R}^n$, define $\overline{\mb{x}} =: \mb{x} - \frac{\mb{1}_n^\top \mb{x}}{n} \mb{1}_n$ as the dispersion state.
Then, the following hold:
\begin{align} \label{eq_laplace1}
	\mb{x}^\top L \mb{x} &= \overline{\mb{x}}^\top L \overline{\mb{x}},
	\\      \label{eq_laplace}
	\lambda_2 \|\overline{\mb{x}} \|_2^2 \leq \mb{x}^\top &L\mb{x} \leq \lambda_n \|\overline{\mb{x}} \|_2^2,
\end{align}
where $\lambda_n$ and $\lambda_2$ respectively denote the largest and smallest non-zero eigenvalue of $L$.
\end{lem}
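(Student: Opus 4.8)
The plan is to prove the two claims in sequence, relying on the single structural fact that for a symmetric Laplacian one has $L\mb{1}_n = \mb{0}$ and, by symmetry, $\mb{1}_n^\top L = \mb{0}^\top$. Writing $c := \frac{\mb{1}_n^\top \mb{x}}{n}$, the dispersion state is $\overline{\mb{x}} = \mb{x} - c\,\mb{1}_n$, i.e. it differs from $\mb{x}$ only by a multiple of the all-ones vector. For the identity \eqref{eq_laplace1}, I would substitute this expression into $\overline{\mb{x}}^\top L \overline{\mb{x}}$ and expand. The resulting cross terms $c\,\mb{x}^\top L \mb{1}_n$, $c\,\mb{1}_n^\top L \mb{x}$ and $c^2\,\mb{1}_n^\top L \mb{1}_n$ all vanish because $L\mb{1}_n = \mb{0}$, leaving only $\mb{x}^\top L \mb{x}$. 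This establishes \eqref{eq_laplace1} and shows that the centering shift is invisible to the Laplacian quadratic form.

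For the eigenvalue bounds \eqref{eq_laplace}, the first step is to check orthogonality: $\mb{1}_n^\top \overline{\mb{x}} = \mb{1}_n^\top \mb{x} - c\,\mb{1}_n^\top \mb{1}_n = \mb{1}_n^\top \mb{x} - c\,n = 0$, so $\overline{\mb{x}}$ lies in the subspace orthogonal to $\mb{1}_n$. Since $L$ is symmetric, I would invoke the spectral theorem to obtain an orthonormal eigenbasis $\{v_i\}_{i=1}^n$ with $v_1 = \frac{1}{\sqrt{n}}\mb{1}_n$ attached to $\lambda_1 = 0$ and $v_2,\dots,v_n$ spanning the orthogonal complement with $0 < \lambda_2 \leq \cdots \leq \lambda_n$. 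Expanding $\overline{\mb{x}} = \sum_{i=2}^n \alpha_i v_i$ (the $v_1$ component being zero by the orthogonality just verified) gives $\overline{\mb{x}}^\top L \overline{\mb{x}} = \sum_{i=2}^n \lambda_i \alpha_i^2$ and $\|\overline{\mb{x}}\|_2^2 = \sum_{i=2}^n \alpha_i^2$. Bounding each active eigenvalue by $\lambda_2 \leq \lambda_i \leq \lambda_n$ yields $\lambda_2\|\overline{\mb{x}}\|_2^2 \leq \overline{\mb{x}}^\top L \overline{\mb{x}} \leq \lambda_n\|\overline{\mb{x}}\|_2^2$, and substituting \eqref{eq_laplace1} to replace $\overline{\mb{x}}^\top L \overline{\mb{x}}$ by $\mb{x}^\top L \mb{x}$ delivers \eqref{eq_laplace}.

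This is fundamentally the Courant–Fischer / Rayleigh-quotient estimate restricted to the orthogonal complement of the Laplacian null space. The only genuine subtlety is ensuring the centering projection annihilates precisely the $v_1$ direction, so that the smallest \emph{active} eigenvalue is $\lambda_2$ rather than $0$; once the orthogonality $\mb{1}_n^\top \overline{\mb{x}} = 0$ is in hand, the spectral decomposition makes the bounds immediate, so I do not anticipate a serious obstacle.
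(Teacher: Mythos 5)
Your proposal is correct and follows essentially the same route the paper takes: the paper cites this lemma from \cite{SensNets:Olfati04} and proves its two-vector generalization (Corollary~\ref{cor_xLy}) by exactly your argument --- kill the cross terms via $L\mb{1}_n = \mb{0}_n$, expand in the orthonormal eigenbasis of $L$, and apply the Rayleigh-quotient bounds on the complement of $\mathrm{span}(\mb{1}_n)$. The one subtlety you correctly flag (that the centering must annihilate precisely the null direction so the active spectrum starts at $\lambda_2$) does rely on the graph being connected, which is the setting the paper assumes when invoking this lemma.
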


One can extend the results of Lemma~\ref{lem_xLy} to consider sector-bound nonlinearity at the nodes/links as follows.

\begin{cor} \label{cor_xLy}
	 Consider two vector states $\mb{x},\mb{y} \in \mathbb{R}^n$ where the entries $y_i=g(x_i)$ for $i \in \{1,\dots,n\}$ such that the sector-bound mapping $g(\cdot)$ is sign-preserving, odd, and monotonically non-decreasing implying $0<\kappa <\frac{y_i}{x_i} < \mc{K}$, where $\kappa, \mc{K} \in \mathbb{R}^+$ are lower/upper sector bound. Let $\overline{\mb{x}} =: \mb{x} - \frac{\mb{1}_n^\top \mb{x}}{n} \mb{1}_n$ and $\overline{\mb{y}} =: \mb{y} - \frac{\mb{1}_n^\top \mb{y}}{n} \mb{1}_n$. Then, the following holds:
	\begin{align} \label{eq_laplace1_xy}
		\mb{x}^\top L \mb{y} &= \overline{\mb{x}}^\top L \overline{\mb{y}},
		\\      \label{eq_laplace_xy}
		\lambda_2 \overline{\mb{x}}^\top \overline{\mb{y}} \leq \mb{x}^\top &L\mb{y} \leq \lambda_n \overline{\mb{x}}^\top \overline{\mb{y}}.
	\end{align}
\end{cor}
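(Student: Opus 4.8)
The plan is to treat the two statements separately, reusing the machinery behind Lemma~\ref{lem_xLy}. The identity \eqref{eq_laplace1_xy} should be purely algebraic and is the easy part. Writing $\overline{\mb{x}} = \mb{x} - c_x\mb{1}_n$ and $\overline{\mb{y}} = \mb{y} - c_y\mb{1}_n$ with $c_x = \tfrac{\mb{1}_n^\top\mb{x}}{n}$ and $c_y = \tfrac{\mb{1}_n^\top\mb{y}}{n}$, I would expand $\overline{\mb{x}}^\top L\overline{\mb{y}}$ and invoke the defining Laplacian property $L\mb{1}_n = \mb{0}_n$ together with symmetry $\mb{1}_n^\top L = (L\mb{1}_n)^\top = \mb{0}_n^\top$. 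The three correction terms $-c_y\mb{x}^\top L\mb{1}_n$, $-c_x\mb{1}_n^\top L\mb{y}$, and $+c_xc_y\mb{1}_n^\top L\mb{1}_n$ then all vanish, leaving $\overline{\mb{x}}^\top L\overline{\mb{y}} = \mb{x}^\top L\mb{y}$. This step uses only that $\mb{1}_n$ spans the kernel of $L$; the nonlinearity $g$ plays no role here.

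For the two-sided bound \eqref{eq_laplace_xy} the natural route is spectral. Diagonalize $L = \sum_{i=1}^n\lambda_i v_iv_i^\top$ in an orthonormal eigenbasis with $\lambda_1 = 0$, $v_1 = \tfrac{1}{\sqrt{n}}\mb{1}_n$, and expand the already mean-centred (hence $\mb{1}_n$-orthogonal) vectors as $\overline{\mb{x}} = \sum_{i\ge2}a_iv_i$ and $\overline{\mb{y}} = \sum_{i\ge2}b_iv_i$. Then $\overline{\mb{x}}^\top L\overline{\mb{y}} = \sum_{i\ge2}\lambda_i a_ib_i$ and $\overline{\mb{x}}^\top\overline{\mb{y}} = \sum_{i\ge2}a_ib_i$, so by \eqref{eq_laplace1_xy} the claim reduces to showing $\sum_{i\ge2}(\lambda_i - \lambda_2)a_ib_i \ge 0$ and $\sum_{i\ge2}(\lambda_n - \lambda_i)a_ib_i \ge 0$. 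The role of the hypotheses on $g$ is to guarantee that $\overline{\mb{x}}$ and $\overline{\mb{y}}$ are co-monotone, which yields $\overline{\mb{x}}^\top\overline{\mb{y}} = \tfrac{1}{2n}\sum_{i,j}(x_i-x_j)(y_i-y_j)\ge0$, so that at least the sign of the target quantity is consistent.

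The hard part is precisely the sign control of the individual cross terms $a_ib_i$. In the quadratic specialisation of Lemma~\ref{lem_xLy} one has $\mb{y} = \mb{x}$, so $a_ib_i = a_i^2 \ge 0$ term-by-term and both inequalities are immediate; with $\mb{y} = g(\mb{x})$ the products $a_ib_i$ need not be individually non-negative, and $\sum_{i\ge2}a_ib_i\ge0$ alone does not sign the weighted sums $\sum_{i\ge2}(\lambda_i-\lambda_2)a_ib_i$. I expect this to be the crux, and the point where the stated eigenvalue-only bound is most fragile: monotonicity of $g$ controls the pairwise products $(x_i-x_j)(y_i-y_j)$ but not their redistribution onto the eigen-directions. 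I would therefore either strengthen the hypothesis to an \emph{incremental} sector condition forcing $y_i-y_j$ to be a bounded multiple of $x_i-x_j$ (which would instead deliver $\kappa\,\mb{x}^\top L\mb{x}\le\mb{x}^\top L\mb{y}\le\mc{K}\,\mb{x}^\top L\mb{x}$, now involving the sector constants), or restrict attention to the class of $(\mb{x},\mb{y})$ actually generated by the algorithm, verifying the sign condition there before relying on \eqref{eq_laplace_xy}.
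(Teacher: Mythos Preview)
Your approach mirrors the paper's almost exactly. For \eqref{eq_laplace1_xy} both you and the paper expand using $L\mb{1}_n=\mb{0}_n$ and symmetry of $L$; there is nothing more to say. For \eqref{eq_laplace_xy} the paper also diagonalises $L$, writes $\mb{x}=\sum_i\alpha_iv_i$ and $\mb{y}=\sum_i\beta_iv_i$, and then \emph{asserts} that the eigenbasis coefficients inherit the sector bound, $0<\kappa<\alpha_i/\beta_i<\mc{K}$, from which $\alpha_i\beta_i>0$ term-by-term would follow and the weighted-sum inequalities would be immediate. It closes by invoking a bilinear ``Rayleigh quotient'' $\lambda_2=\min_{\overline{\mb{x}},\overline{\mb{y}}\perp\mb{1}_n}\overline{\mb{x}}^\top L\overline{\mb{y}}/\overline{\mb{x}}^\top\overline{\mb{y}}$, which is not a valid variational characterisation absent that sign constraint on the cross terms.

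So the gap you flag---that the products $a_ib_i$ need not be individually nonnegative---is real, and the paper's own proof does not close it; it simply asserts the very step you correctly hesitate over. A pointwise sector bound $\kappa\le y_i/x_i\le\mc{K}$ in the standard basis does not in general transfer to an arbitrary orthonormal eigenbasis, and the bilinear Rayleigh form can be made negative even when $\overline{\mb{x}}^\top\overline{\mb{y}}>0$. Your instinct to retreat to an \emph{incremental} sector condition $\kappa\le(y_i-y_j)/(x_i-x_j)\le\mc{K}$ is the sound repair: via the identity $\mb{x}^\top L\mb{y}=\tfrac{1}{2}\sum_{i,j}W_{ij}(x_i-x_j)(y_i-y_j)$ it yields $\kappa\,\mb{x}^\top L\mb{x}\le\mb{x}^\top L\mb{y}\le\mc{K}\,\mb{x}^\top L\mb{x}$ directly, and together with Lemma~\ref{lem_xLy} this delivers precisely the bound of Lemma~\ref{lem_xLy2}---which is what the convergence proofs downstream actually consume.
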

\begin{proof}
	Recall that for symmetric $W$ matrix, its associated Laplacian $L$ satisfies $\mb{1}^\top_n  L = L \mb{1}_n = \mb{0}_n$. Then,
	\begin{align} \nonumber
		\mb{x}^\top L \mb{y} &= (\overline{\mb{x}} + \frac{\mb{1}_n^\top \mb{x}}{n} \mb{1}_n)^\top L (\overline{\mb{y}} + \frac{\mb{1}_n^\top \mb{y}}{n} \mb{1}_n) \\ \nonumber
		&=  \overline{\mb{x}}^\top L \overline{\mb{y}},
	\end{align}
	which proves \eqref{eq_laplace1_xy}.
	The proof of the second part follows from a special case of Courant-Fischer's theorem in \cite{hornjohnson}. Denote the eigenvalues of $L$ as $\lambda_1 <\lambda_2\leq \lambda_3 \leq \dots \leq \lambda_n$ with $\lambda_1 = 0$ associated with the eigenvalue $\mb{1}_n$. We can express $\mb{x},\mb{y}$ in terms of the eigenbasis of $L$, denoted by unit vectors $v_1,\dots,v_n$, as
	\begin{align} \nonumber
		\mb{x} = \alpha_1 v_1+ \dots + \alpha_n v_n, ~ \mbox{and}~ \mb{y} = \beta_1 v_1+ \dots + \beta_n v_n,
	\end{align}	
	where $0 < \kappa <\frac{\alpha_i}{\beta_i} < \mc{K}$. Expanding $\mb{x}^\top L \mb{y} = \overline{\mb{x}}^\top L \overline{\mb{y}}$, and  using the eigendecomposition of $L$, we get
	\begin{align}\nonumber
		\overline{\mb{x}}^\top L \overline{\mb{y}} = \alpha_1 \beta_1 \lambda_1 +\alpha_2 \beta_2 \lambda_2 + \dots + \alpha_n \beta_n \lambda_n,
	\end{align}
	where due to sector-bound condition and monotonicity, we have $0<\kappa \beta_i^2 < \alpha_i \beta_i < \mc{K} \beta_i^2$. We also have
	$\overline{\mb{x}}^\top \overline{\mb{y}} = \alpha_1 \beta_1 + \dots + \alpha_n \beta_n.$
	Then, following the non-decreasing order of eigenvalues and the fact that $\lambda_1 =0$ is associated with the eigenvector $\mb{1}_n$, we have
	\begin{align}\nonumber
		\lambda_2 = \min_{\overline{\mb{x}},\overline{\mb{y}} \neq 0, \overline{\mb{x}} \perp \mb{1}_n,\overline{\mb{y}}  \perp \mb{1}_n} \frac{\overline{\mb{x}}^\top L \overline{\mb{y}}}{\overline{\mb{x}}^\top \overline{\mb{y}}}, ~\mbox{and}~ \lambda_n = \max_{\overline{\mb{x}}, \overline{\mb{y}} \neq 0, \overline{\mb{x}} \perp \mb{1}_n,\overline{\mb{y}}  \perp \mb{1}_n} \frac{\overline{\mb{x}}^\top L \overline{\mb{y}}}{\overline{\mb{x}}^\top \overline{\mb{y}}}
	\end{align}
	that is known as the Rayleigh quotient \cite{hornjohnson}, from which it readily follows~\eqref{eq_laplace_xy}.
\end{proof}
\begin{lem}  \label{lem_xLy2}
	Let $\mb{y}= g(\mb{x})$ such that the sector-bound mapping $g(\cdot)$ is sign-preserving, odd, and monotonically non-decreasing such that the element-wise relations $ \kappa x_i < y_i < \mc{K} x_i$ hold for all $i,j$. Then, it follows that
	\begin{align} \label{eq_laplace2}
		\lambda_2 \kappa \|\overline{\mb{x}} \|_2^2\leq \mb{x}^\top L &\mb{y} \leq \lambda_n \mc{K} \|\overline{\mb{x}} \|_2^2.
	\end{align}
\end{lem}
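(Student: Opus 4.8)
The plan is to pass from the bilinear form $\mathbf{x}^\top L \mathbf{y}$ to the quadratic form $\mathbf{x}^\top L \mathbf{x}$ using the sector bound, and then convert $\mathbf{x}^\top L \mathbf{x}$ into a multiple of $\|\overline{\mathbf{x}}\|_2^2$ by invoking Lemma~\ref{lem_xLy}. The shape of the target inequality guides this split: the spectral factors $\lambda_2,\lambda_n$ should come entirely from the bound \eqref{eq_laplace}, while the nonlinearity factors $\kappa,\mathcal{K}$ should come entirely from the sector condition. This suggests proving the two contributions separately and then chaining them.

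First I would rewrite the Laplacian bilinear form in edge-difference coordinates, using $L=D-W$ with symmetric, nonnegative $W$, as
\[
\mathbf{x}^\top L \mathbf{y} = \tfrac{1}{2}\sum_{i,j} W_{ij}(x_i - x_j)(y_i - y_j),
\]
and note that the choice $\mathbf{y}=\mathbf{x}$ gives $\mathbf{x}^\top L \mathbf{x} = \tfrac{1}{2}\sum_{i,j} W_{ij}(x_i - x_j)^2 \geq 0$. The core step is then to sandwich each pairwise term $(x_i-x_j)\bigl(g(x_i)-g(x_j)\bigr)$ between $\kappa(x_i-x_j)^2$ and $\mathcal{K}(x_i-x_j)^2$. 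Since every $W_{ij}\geq 0$, summing these termwise bounds yields
\[
\kappa\,\mathbf{x}^\top L \mathbf{x} \;\leq\; \mathbf{x}^\top L \mathbf{y} \;\leq\; \mathcal{K}\,\mathbf{x}^\top L \mathbf{x}.
\]
Finally, because $\kappa,\mathcal{K}>0$ and $\mathbf{x}^\top L \mathbf{x}\geq 0$, I can apply \eqref{eq_laplace} on both ends to conclude $\lambda_2\kappa\|\overline{\mathbf{x}}\|_2^2 \leq \kappa\,\mathbf{x}^\top L\mathbf{x}\leq \mathbf{x}^\top L\mathbf{y}\leq \mathcal{K}\,\mathbf{x}^\top L\mathbf{x}\leq \lambda_n\mathcal{K}\|\overline{\mathbf{x}}\|_2^2$, which is exactly \eqref{eq_laplace2}.

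The main obstacle is the pairwise sandwich step, since the hypothesis is stated pointwise ($\kappa x_i < y_i < \mathcal{K}x_i$) whereas what I actually need is a bound on the \emph{differences} $(x_i-x_j)(y_i-y_j)$. Here I would lean on all three structural assumptions: monotonicity guarantees $(x_i-x_j)(y_i-y_j)\geq 0$, and sign-preservation together with oddness let me control mixed-sign pairs by reducing them to same-sign comparisons. The cleanest and most transparent way to close this gap is to read the sector condition as a slope restriction $\kappa \leq \frac{g(a)-g(b)}{a-b}\leq \mathcal{K}$, which makes the termwise inequalities immediate; I would make this interpretation explicit. As an alternative route that avoids the edge-difference identity, one could instead start from Corollary~\ref{cor_xLy} and its bound \eqref{eq_laplace_xy} in terms of $\overline{\mathbf{x}}^\top\overline{\mathbf{y}}$, and then bound $\overline{\mathbf{x}}^\top\overline{\mathbf{y}}$ between $\kappa\|\overline{\mathbf{x}}\|_2^2$ and $\mathcal{K}\|\overline{\mathbf{x}}\|_2^2$; I expect the quadratic-form approach to be more direct because it produces the $\lambda_2\kappa$ and $\lambda_n\mathcal{K}$ factorization without needing to re-track the eigenbasis coefficients.
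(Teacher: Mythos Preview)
Your primary route differs from the paper's, but there is a real gap at the step you yourself flag. The pointwise sector condition $\kappa t < g(t) < \mathcal{K}t$, together with monotonicity, oddness, and sign-preservation, does \emph{not} imply the secant bound $\kappa \leq \frac{g(a)-g(b)}{a-b} \leq \mathcal{K}$ that your termwise sandwich needs. Mixed-sign pairs happen to work (if $x_i>0>x_j$ then $y_i-y_j \in [\kappa(x_i-x_j),\,\mathcal{K}(x_i-x_j)]$ follows directly from the pointwise bounds), but same-sign pairs do not: take $g(t)=t$ on $[0,1]$ and $g(t)=3t-2$ on $[1,2]$, extended oddly; this $g$ is odd, sign-preserving, strictly increasing, and satisfies $g(t)/t \in [1,2]$ everywhere on that range, yet the secant slope on $[1,2]$ equals $3>2$. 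So ``reading the sector condition as a slope restriction'' is a strictly stronger hypothesis than the one stated in the lemma, and under the stated hypothesis the inequality $(x_i-x_j)(y_i-y_j) \leq \mathcal{K}(x_i-x_j)^2$ (and likewise the lower bound) can fail on same-sign edges. Your chain $\kappa\,\mathbf{x}^\top L\mathbf{x} \leq \mathbf{x}^\top L\mathbf{y} \leq \mathcal{K}\,\mathbf{x}^\top L\mathbf{x}$ therefore does not go through without adding an incremental sector assumption.

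The alternative you sketch and set aside---apply Corollary~\ref{cor_xLy} to obtain $\lambda_2\,\overline{\mathbf{x}}^\top\overline{\mathbf{y}} \leq \mathbf{x}^\top L\mathbf{y} \leq \lambda_n\,\overline{\mathbf{x}}^\top\overline{\mathbf{y}}$ and then sandwich $\overline{\mathbf{x}}^\top\overline{\mathbf{y}}$ between $\kappa\|\overline{\mathbf{x}}\|_2^2$ and $\mathcal{K}\|\overline{\mathbf{x}}\|_2^2$---is exactly the paper's route. The paper expands $\overline{\mathbf{x}}^\top\overline{\mathbf{y}}$ and $\|\overline{\mathbf{x}}\|_2^2$ in terms of $\mathbf{x}^\top\mathbf{y}$, $\mathbf{x}^\top\mathbf{x}$, and the mean terms, applies the pointwise inequality $\kappa\mathbf{x} < \mathbf{y} < \mathcal{K}\mathbf{x}$ to that expansion, and chains the result with \eqref{eq_laplace_xy}. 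So the approach you considered less direct is the one the paper actually uses; your edge-difference argument is cleaner in spirit but requires an assumption the lemma does not give you.
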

\begin{proof}
	From the definition of $\overline{\mb{x}}, \overline{\mb{y}}$, we have
	\begin{align} \nonumber
		\overline{\mb{x}}^\top \overline{\mb{y}} &= (\mb{x} - \frac{\mb{1}_n^\top \mb{x}}{n} \mb{1}_n)^\top(\mb{y} - \frac{\mb{1}_n^\top \mb{y}}{n} \mb{1}_n)\\ \nonumber
		&= \mb{x}^\top \mb{y}  - \frac{\mb{1}_n^\top \mb{x}}{n} \mb{1}_n^\top\mb{y} - \frac{\mb{1}_n^\top \mb{y}}{n} \mb{1}_n^\top \mb{x} + \frac{\mb{1}_n^\top \mb{y} \mb{1}_n^\top \mb{x}}{n} \mb{1}_n^\top \mb{1}_n
		\\ \label{eq_proof_xyn}
		&= \mb{x}^\top \mb{y} + \frac{\mb{1}_n^\top \mb{y} \mb{1}_n^\top \mb{x}}{n} (n-2).
	\end{align}	
	From the sector-bound relation, we have $ \kappa \mb{x} < \mb{y} < \mc{K} \mb{x}$ and applying this to~\eqref{eq_proof_xyn}, we obtain
	\begin{align} \nonumber
		\kappa (\mb{x}^\top \mb{x} + \frac{\mb{1}_n^\top \mb{x} \mb{1}_n^\top \mb{x}}{n} (n-2)) &\leq \mb{x}^\top \mb{y} + \frac{\mb{1}_n^\top \mb{y} \mb{1}_n^\top \mb{x}}{n} (n-2) \\ \label{eq_proof_xyn1}
		&\leq \mc{K} (\mb{x}^\top \mb{x} + \frac{\mb{1}_n^\top \mb{x} \mb{1}_n^\top \mb{x}}{n} (n-2)).
	\end{align}
	Similar to \eqref{eq_proof_xyn}, from the definition of $\overline{\mb{x}}$, it readily follows that
	\begin{align} \label{eq_proof_xyn2}
		\mb{x}^\top \mb{x} + \frac{\mb{1}_n^\top \mb{x} \mb{1}_n^\top \mb{x}}{n} (n-2) = \overline{\mb{x}}^\top \overline{\mb{x}} = \|\overline{\mb{x}} \|_2^2.
	\end{align}
	Combining Eqs.~\eqref{eq_proof_xyn},~\eqref{eq_proof_xyn1}, and ~\eqref{eq_proof_xyn2}, we get  $\kappa \|\overline{\mb{x}} \|_2^2 \leq \overline{\mb{x}}^\top \overline{\mb{y}}\leq \mc{K} \|\overline{\mb{x}} \|_2^2$. Then, combining this with Eq.~\eqref{eq_laplace_xy} in Corollary~\ref{cor_xLy}, we obtain~\eqref{eq_laplace2}.
\end{proof}

Note that the second smallest eigenvalue of $L$, denoted by  $\lambda_2$, is known as algebraic connectivity. It is known from the literature that adding links to the network $\mc{G}$ increases its algebraic connectivity \cite{SensNets:Olfati04}.  These results can be extended to the union network as follows. First, let us recall the definition of the uniformly-connectivity.
\begin{defn}\label{def1}
	For a uniformly-connected network, there exists \textit{bounded} time $T>0$ such that $\mc{G}_T(k) = \cup_{k}^{k+T} \mc{G}(k)$ is connected for all $k$.
\end{defn}

\begin{rem}
The boundedness of $T$ in Definition~\ref{def1} is essential for uniform-connectivity, as it guarantees the network regains connectivity within a finite time window. Consider a counterexample: a network that connects only during sporadic intervals and remains disconnected throughout most of the time domain. In such a case, no bounded $T$ exists for which $\mathcal{G}_T(k) = \cup_{j=k}^{k+T} \mathcal{G}(j)$ is connected for all $k \geq 0$. Therefore, this network fails to satisfy the uniform-connectivity property. Uniform connectivity requires not just eventual reconnection, but reconnection within a consistent, bounded timeframe across the entire time domain. \hfill $\circ$
\end{rem}

One can extend the results of Lemma~\ref{lem_xLy} and Corollary~\ref{cor_xLy} to uniformly-connected networks as follows.

\begin{cor} \label{cor_xLy_union}
 Consider uniformly-connected network $\mc{G}_T(k) = \cup_{k}^{k+T} \mc{G}(k)$ with associated Laplacian $L_T$ and eigenvalues $\lambda_{2T}$ and $\lambda_{nT}$. Then,  $\lambda_{2T} \overline{\mb{x}}^\top \overline{\mb{y}} \leq \mb{x}^\top L_T\mb{y} \leq \lambda_{nT} \overline{\mb{x}}^\top \overline{\mb{y}}$ and, following from Lemma~\ref{lem_xLy2}, we have  $\lambda_{2T}  \kappa \|\overline{\mb{x}} \|_2^2\leq \mb{x}^\top L_T \mb{y} \leq \lambda_{nT} \mc{K} \|\overline{\mb{x}} \|_2^2$.
\end{cor}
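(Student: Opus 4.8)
The plan is to recognize that the entire content of Corollary~\ref{cor_xLy_union} reduces to verifying that $L_T$ is itself a symmetric Laplacian of a \emph{connected} graph; once this is established, both statements follow by applying Corollary~\ref{cor_xLy} and Lemma~\ref{lem_xLy2} verbatim to $L_T$ in place of $L$. Thus no new estimation is required — the work lies entirely in confirming that $L_T$ inherits the structural hypotheses those earlier results assume, and then observing that the Rayleigh-quotient machinery is indifferent to which connected Laplacian it is fed.

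First I would record the three properties of $L_T$ that are needed. Writing $L(j) = D(j) - W(j)$ for the Laplacian at each time $j$, the union Laplacian over the window accumulates these, so $L_T = \sum_{j=k}^{k+T} L(j)$ (equivalently, the Laplacian of the graph whose edge set is $\cup_{j=k}^{k+T}\mc{E}(j)$). Since each $W(j)$ is symmetric, each $L(j)$ is symmetric and positive semidefinite, and hence so is their sum $L_T$; moreover each $L(j)\mb{1}_n = \mb{0}_n$, so $L_T \mb{1}_n = \mb{0}_n$ and $\mb{1}_n^\top L_T = \mb{0}_n$. This secures symmetry together with the zero eigenvalue carrying eigenvector $\mb{1}_n$.

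Second, I would invoke Definition~\ref{def1}: uniform connectivity guarantees that $\mc{G}_T(k)$ is connected over the bounded window $T$, so $L_T$ has \emph{exactly} one zero eigenvalue and its second smallest eigenvalue $\lambda_{2T}$ is strictly positive. Consequently $L_T$ satisfies precisely the hypotheses placed on $L$ in Corollary~\ref{cor_xLy}: symmetric, with ordered spectrum $0 = \lambda_{1T} < \lambda_{2T} \leq \dots \leq \lambda_{nT}$ and $\mb{1}_n$ spanning the null space. The argument there — expanding $\mb{x},\mb{y}$ in the eigenbasis of $L_T$, using $\mb{x}^\top L_T \mb{y} = \overline{\mb{x}}^\top L_T \overline{\mb{y}}$, and bounding the quotient $\overline{\mb{x}}^\top L_T \overline{\mb{y}}/\overline{\mb{x}}^\top \overline{\mb{y}}$ between $\lambda_{2T}$ and $\lambda_{nT}$ — then transfers unchanged, yielding $\lambda_{2T}\overline{\mb{x}}^\top\overline{\mb{y}} \leq \mb{x}^\top L_T \mb{y} \leq \lambda_{nT}\overline{\mb{x}}^\top\overline{\mb{y}}$.

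Finally, for the sector-bound refinement I would note that the inequality $\kappa\|\overline{\mb{x}}\|_2^2 \leq \overline{\mb{x}}^\top\overline{\mb{y}} \leq \mc{K}\|\overline{\mb{x}}\|_2^2$, established inside the proof of Lemma~\ref{lem_xLy2}, is purely algebraic and references no Laplacian; chaining it with the displayed bound above (now for $L_T$) delivers $\lambda_{2T}\kappa\|\overline{\mb{x}}\|_2^2 \leq \mb{x}^\top L_T \mb{y} \leq \lambda_{nT}\mc{K}\|\overline{\mb{x}}\|_2^2$. The only genuinely load-bearing step — and the one I would be most careful about — is the second: justifying that the union (equivalently, the sum) of the per-time Laplacians is a legitimate Laplacian whose \emph{single} zero eigenvalue follows from connectivity of $\mc{G}_T(k)$. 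This is exactly where the boundedness of $T$ in Definition~\ref{def1} is indispensable, since without a finite accumulation window the union need not be connected, $\lambda_{2T}$ could vanish, and the lower bound would collapse.
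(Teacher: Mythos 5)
Your proposal is correct and matches the paper's (implicit) argument: the paper states Corollary~\ref{cor_xLy_union} without proof as an immediate extension of Corollary~\ref{cor_xLy} and Lemma~\ref{lem_xLy2}, and your write-up simply supplies the details of why $L_T$ inherits the needed structure (symmetry, $\mb{1}_n$ in the kernel, and a single zero eigenvalue from uniform connectivity). You correctly identify that the only load-bearing step is the connectivity of the union graph over the bounded window $T$, which is exactly what Definition~\ref{def1} provides.
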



\subsection{Percolation Theory} \label{sec_perc}

Percolation theory is a mathematical framework that studies the behaviour of connected clusters in random graphs or networks.  Specifically, the theory investigates how global connectivity emerges from local random processes, focusing on phase transitions where the system's properties change dramatically at critical thresholds. The key challenge percolation theory addresses is determining when a network maintains its essential connectivity properties despite random removal of links or nodes\cite{cohen2000resilience,li2021percolation}.  In many control applications, the connectivity of the graph plays a key role in the convergence. This is addressed by the notion of \emph{bond-percolation} as defined next.

\begin{defn} \cite{sahimi1994applications}
Let denote the probability of link removal by $p_l$. Then,  \textit{bond-percolation} of the network refers to the probability threshold $p_c$ (or the likelihood) that a network $\mc{G}$  will lose its connectivity when links are randomly removed or \textit{percolated} from $\mc{G}$. In other words, for $p_l>p_c$ there is no giant connected component spanning all nodes $\mc{V}$ with probability $1$.
\end{defn}

In this context, Kolmogorov's zero-one law states a two-sided phase transition point in network connectivity, implying that there is a sharp, binary phase transition in terms of network connectivity as the percolation probability changes \cite{wierman2010percolation}. This means that there is a critical threshold value of $p_c \in [0,1]$, where the nature of network connectivity changes considerably. For $p_l<p_c$ there is a subcritical regime, where a connected component spanning the entire network exists with probability $1$. The supercritical regime occurs for  $p_l>p_c$, implying that the network is fragmented, and there is no connected path spanning from one side of the network to the other \cite{mohseni2021percolation}, i.e., the probability of network connectivity is zero with certainty.

This is an essential concept for understanding how network properties change as bonds/links are randomly added or removed. For example, given an Erdos-Renyi (ER) graph \cite{erdos1960evolution} of $n$ nodes and linking probability $p \in [0,1]$ its bond-percolation threshold is equal to
\begin{align} \label{eq_pc}
p_c =1-\frac{1}{\overline{d_G}},
\end{align}
with $\overline{d_G}=\frac{(n-1)p}{2}$ as the average node degree~\cite{kawamoto2015precise,SensNets:Bollobas98}.

Properties of ER random graphs hold asymptotically. This means that an ER random graph can become \textit{asymptotically almost surely connected} for a given threshold but when examined for a small graph with the same probability, it is disconnected.

It is important to note that the connectivity properties of ER random graphs are established in the asymptotic limit. Therefore, throughout the remainder of this paper, when we refer to connectivity for ER random graphs, we specifically mean connectivity in the  \textit{asymptotically almost surely (a.a.s.)} sense. This distinction is crucial as it captures the probabilistic nature of connectivity in large networks. Furthermore, these random graph models provide powerful analytical tools for estimating percolation thresholds in practical applications, enabling us to predict and enhance the resilience of large-scale, complex networked control systems against random failures or attacks.

\section{The Proposed DRA Algorithm} \label{sec_alg}

This section presents our distributed resource allocation (DRA) algorithm designed to address the challenges of link failures and time delays in multi-agent networks. In Section~\ref{subsec:free}, we introduce the fundamental dynamics for delay-free networks and analyze its key properties, including all-time feasibility and equilibrium conditions. In Section~\ref{subsec:delay}, we extend the algorithm to handle heterogeneous time delays while preserving these essential properties. Finally, in Section~\ref{subsec:complete}, we summarize the complete algorithm and highlight its advantages compared to existing approaches in the literature.

\subsection{Solution to the Delay-Free Distributed Dynamics}\label{subsec:free}

Consider a group of $n$ networked agents which update their state $x_i(k) \in \mathbb{R}$ (state of agent $i$ at time step $k$) satisfying the following dynamics:
\begin{align}
x_i(k+1) = x_i(k) -\eta \sum_{j \in \mc{N}_i} W_{ij} g_n\Big(g_l(\partial f_i (k)) - g_l(\partial f_j (k))\Big),
\label{eq_sol}
\end{align}
with $\partial f_i=\frac{df_i}{dx_i}$, and $\eta>0$ as step-rate. The nonlinear mappings $g_n:\mathbb{R} \mapsto \mathbb{R}$ and $g_l:\mathbb{R} \mapsto \mathbb{R}$ denote possible nonlinearity at the nodes and the links, respectively.  These mappings satisfy the following properties.

\begin{ass} \label{ass_nonlin}
	The nonlinear functions $g_n$ and $g_l$ are odd and sign-preserving, and also satisfy the sector bound condition in the form $\kappa_n z < g_n(z) < \mc{K}_n z$ and $\kappa_l z< g_l(z) < \mc{K}_l z$.
\end{ass}
Examples of such nonlinear mappings in real-world applications are, just to mention a few,  the log-scale quantization given by 
	\begin{align} \nonumber
	g(z) &= \mbox{sgn}(z) \exp\Big(\rho \left[ \frac{\log(|z|)}{\rho}\right]\Big),
\end{align}	
as well as the bounded-domain saturation (clipping) defined as
\begin{align} \nonumber
	g(z) = 
	\begin{cases}
		R \mbox{sgn}(z) & |z| > R\\
		z & |z| \leq R
	\end{cases}
\end{align}
with saturation level $R$.
For these nonlinear functions the upper/lower sector bound parameters are readily obtainable; for example, for log-scale quantization we have $\kappa = 1-\frac{\rho}{2},~\mc{K} = 1+\frac{\rho}{2}$ \cite{TASE25,JFI25} and for bounded-domain saturation, given the domain $|z| \leq D_{max}$ and $D_{max} >R$, we have $\kappa = \frac{R}{D_{max}}$ and $\mc{K} = 1$. Some works in the literature even apply signum-based functions to improve the convergence rate \cite{li2023faster}.

Notice that the dynamics in~\eqref{eq_sol} satsifies the \textit{Laplacian-tracking} property, i.e., its summation satisfies the following:
\begin{equation} \label{eq_sol_L}
\kappa_n \kappa_l L \nabla F(k) \leq \sum_{j \in \mc{N}_i} W_{ij} g_n\Big(g_l(\partial f_i (k)) - g_l(\partial f_j (k))\Big) \leq \mc{K}_n \mc{K}_l L \nabla F(k).
\end{equation}

Additionally, we have the following assumption regarding the agents' exchange of information.

\begin{ass} \label{ass_net}
The agents communicate over an undirected network $\mathcal{G}(k) = {\mathcal{V}, \mathcal{E}(k)}$ with a symmetric weight matrix $W(k) = [W_{ij}(k)]_{i,j \in \mathcal{V}}$ which is (possibly) time-varying and \mbox{uniformly-connected}.
\end{ass}

Assumption~\ref{ass_net} serves two critical purposes: First, it guarantees that the proposed dynamics maintains all-time feasibility (also called anytime feasibility), thereby satisfying the equality constraint at every iteration as formally demonstrated in Lemma~\ref{lem_feas}. Second, the uniform connectivity condition is essential for ensuring algorithmic resilience against communication imperfections such as time-delays and packet drops, which will be rigorously analyzed in subsequent sections of this paper.

Before proceeding with the analysis of the algorithm's feasibility, we introduce the set of vectors that satisfy our equality constraint, which will be essential for characterizing the feasible region of our distributed resource allocation problem.

\begin{defn}\label{def_feas}
Define $\mc{S}_b$ as the set of all feasible points satisfying the equality constraint $\{\mb{x} \in \mathbb{R}^n|\mb{1}_n^\top\mb{x}  = b\}$ with $\mb{1}_n$ as all-one vector of size $n$.
\end{defn}

Having established the feasible set $S_b$ in Definition \ref{def_feas}, we now demonstrate that our proposed algorithm ensures the solution trajectory remains within this set throughout its execution\footnote{For initialization, we need $\sum_{i=1}^n x_i(0) = b$. There exist distributed algorithms to satisfy this initial constraint and locally allocate the initial values, see \cite[Algorithm~2]{cherukuri2015distributed} for example.}. The following lemma proves that the equality constraint $\sum_{i=1}^n x_i = b$ (or $\mathbf{1}_n^\top\mathbf{x} = b$) in \eqref{eq_dra} is satisfied at all iterations of the algorithm, a property we refer to as \emph{all-time feasibility}.

\begin{lem} \label{lem_feas}
	Given $g_n(\cdot)$ and $g_l(\cdot)$ satisfying Assumption~\ref{ass_nonlin} and $\mb{x}(0) \in \mc{S}_b$, $\mc{G} = \{\mc{V},\mc{E}\}$ satisfying Assumption~\ref{ass_net}, the solution under dynamics \eqref{eq_sol} is all-time feasible and the equality constraints is satisfied at all times, i.e., $\mb{x}(k) \in \mc{S}_b$ for $k>0$.
\end{lem}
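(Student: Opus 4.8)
The plan is to show that the aggregate quantity $\mb{1}_n^\top \mb{x}(k)$ is conserved by the dynamics \eqref{eq_sol}, and then close the argument by induction on $k$ using the initialization $\mb{x}(0) \in \mc{S}_b$. Concretely, all-time feasibility is equivalent to the invariance $\mb{1}_n^\top \mb{x}(k+1) = \mb{1}_n^\top \mb{x}(k)$, so it suffices to prove that the update term vanishes in aggregate.

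First I would sum the componentwise update \eqref{eq_sol} over all $i \in \{1,\dots,n\}$ (equivalently, left-multiply by $\mb{1}_n^\top$), which gives
\begin{equation} \nonumber
\mb{1}_n^\top \mb{x}(k+1) = \mb{1}_n^\top \mb{x}(k) - \eta \, S(k), \quad S(k) := \sum_{i=1}^n \sum_{j \in \mc{N}_i} W_{ij}\, g_n\big(g_l(\partial f_i(k)) - g_l(\partial f_j(k))\big).
\end{equation}
The heart of the proof is to establish $S(k) = 0$. I would do this by relabelling the dummy indices $i \leftrightarrow j$: since the network is undirected (Assumption~\ref{ass_net}), $j \in \mc{N}_i$ if and only if $i \in \mc{N}_j$, so the double sum may be rewritten over the same index set with $i$ and $j$ interchanged. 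Invoking the symmetry $W_{ij} = W_{ji}$ from Assumption~\ref{ass_net} and the oddness of $g_n$ from Assumption~\ref{ass_nonlin}, namely $g_n\big(g_l(\partial f_j) - g_l(\partial f_i)\big) = -\,g_n\big(g_l(\partial f_i) - g_l(\partial f_j)\big)$, the relabelled sum equals $-S(k)$. Hence $S(k) = -S(k)$, which forces $S(k) = 0$. (Equivalently, one can pair each directed contribution of edge $(i,j)$ with that of $(j,i)$ and observe the two cancel term-by-term.) Consequently $\mb{1}_n^\top \mb{x}(k+1) = \mb{1}_n^\top \mb{x}(k)$ for every $k \geq 0$.

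Finally I would conclude by induction: the base case $\mb{1}_n^\top \mb{x}(0) = b$ holds because $\mb{x}(0) \in \mc{S}_b$ by hypothesis, and the conservation identity just derived propagates the equality to every subsequent step, yielding $\mb{1}_n^\top \mb{x}(k) = b$, i.e., $\mb{x}(k) \in \mc{S}_b$, for all $k > 0$.

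The only point requiring care is the index bookkeeping in the cancellation of $S(k)$: one must ensure the undirected edge structure lets the relabelling map the summation range onto itself without omitting or double-counting any ordered pair. This is precisely what symmetry of $W$ guarantees. I expect no genuine obstacle here; notably, the argument needs \emph{only} the oddness of $g_n$ together with the symmetry of $W$, and requires neither the sector bounds nor the monotonicity of $g_n, g_l$ in Assumption~\ref{ass_nonlin} (those properties are reserved for the later convergence analysis), nor does it require $g_l$ to be odd.
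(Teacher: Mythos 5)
Your proof is correct and follows essentially the same route as the paper's: sum the update over all nodes, cancel the double sum via the symmetry $W_{ij}=W_{ji}$ and the oddness of $g_n$, and propagate $\mb{1}_n^\top\mb{x}(k)=b$ from the initialization. Your added observation that only the oddness of $g_n$ (and not the sector bounds, monotonicity, or oddness of $g_l$) is actually needed is a slightly sharper accounting of the hypotheses than the paper gives, but the argument is the same.
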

\begin{proof}
	At any time $k>0$, we have
	\begin{align}
		\mb{1}_n^\top \mb{x}(k+1) = \mb{1}_n^\top \mb{x}(k)
		-\eta \sum_{i = 1}^n \sum_{j \in \mc{N}_i} W_{ij} g_n\Big(g_l(\partial f_i (k)) - g_l(\partial f_j (k))\Big).
		\label{eq_sol_proof}
	\end{align}
	We have the functions $g_n(\cdot)$ and $g_l(\cdot)$ as sign-preserving odd mappings and we also have $W_{ij}=W_{ji}$. This implies that the dual summation in \eqref{eq_sol_proof} is equal to zero and $\mb{1}_n^\top \mb{x}(k+1) = \mb{1}_n^\top \mb{x}(k)$. Recalling that $\mb{x}(0) \in \mc{S}_b$ this implies that $\mb{x}(k) \in \mc{S}_b$ for all $k>0$.
\end{proof}

This critical all-time feasibility property guarantees that the system can be terminated at any point while preserving the feasibility of the equality constraint. In practical distributed applications such as economic dispatch and generator coordination, this property is particularly valuable as it ensures continuous service without disruption -- the power demand-supply balance is preserved throughout the algorithm's execution regardless of when termination occurs.

Furthermore, we can establish that the equilibrium $\mathbf{x}^*$ of the proposed dynamics  \eqref{eq_sol} must satisfy $\nabla F(\mathbf{x}^*) \in \text{span}(\mathbf{1}_n)$. This condition has important implications: any point where $\nabla F(\mathbf{x}^*) \notin \text{span}(\mathbf{1}_n)$ cannot serve as an equilibrium of our system \eqref{eq_sol}. The formal proof of this characteristic is presented in the following lemma.

\begin{lem} \label{lem_eqb}
Under the uniform-connectivity of undirected multi-agent network $\mc{G}$, the equilibrium $\mb{x}^*$ of the dynamics \eqref{eq_sol} is in the form $\nabla F(\mb{x}^*) \in \mbox{span}(\mb{1}_n)$.
\end{lem}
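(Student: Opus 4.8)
The plan is to characterize the equilibrium by setting $x_i(k+1) = x_i(k)$ in \eqref{eq_sol} and showing that the resulting fixed-point condition forces the gradient entries to agree across the network. At an equilibrium $\mb{x}^*$ the update increment vanishes at every node, so for each $i$ and each time $k$ we have $\sum_{j \in \mc{N}_i(k)} W_{ij} g_n\big(g_l(\partial f_i^*) - g_l(\partial f_j^*)\big) = 0$, where I abbreviate $\partial f_i^* := \partial f_i(x_i^*)$. Writing $y_i := g_l(\partial f_i^*)$, the goal reduces to proving $y_i = y_j$ for all pairs, since the strict monotonicity of $g_l$ (guaranteed by Assumption~\ref{ass_nonlin}, where $g_l$ is odd, sign-preserving, and sector-bounded with $\kappa_l > 0$) then yields $\partial f_i^* = \partial f_j^*$ for all $i,j$, i.e. $\nabla F(\mb{x}^*) \in \mbox{span}(\mb{1}_n)$.

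First I would contract the vector equilibrium condition into a scalar energy identity by multiplying node $i$'s balance by $\partial f_i^*$ and summing over $i$. Exploiting the symmetry $W_{ij} = W_{ji}$ together with the oddness of $g_n$, the double sum symmetrizes to $\tfrac12 \sum_{i,j} W_{ij} \big(\partial f_i^* - \partial f_j^*\big)\, g_n\big(y_i - y_j\big) = 0$. The key observation is that every summand is nonnegative: because $g_l$ is monotonically non-decreasing, $\partial f_i^* - \partial f_j^*$ and $y_i - y_j$ share the same sign, and because $g_n$ is sign-preserving, $g_n(y_i - y_j)$ has the same sign as $y_i - y_j$; hence the product $(\partial f_i^* - \partial f_j^*)\, g_n(y_i - y_j) \geq 0$, with equality iff $\partial f_i^* = \partial f_j^*$. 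A sum of nonnegative terms equal to zero forces each term to vanish, so $\partial f_i^* = \partial f_j^*$ holds across every active link $(i,j)$.

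Finally I would upgrade this edge-wise equality to a global one using uniform connectivity. At a fixed equilibrium the balance holds at every time step, so the equality $\partial f_i^* = \partial f_j^*$ is valid for every link appearing in the union graph $\mc{G}_T(k) = \cup_{k}^{k+T}\mc{G}(k)$ of Definition~\ref{def1}. Since this union is connected, any two nodes are joined by a path of such links, and transitivity propagates the equality to all node pairs, giving $\nabla F(\mb{x}^*) = c\,\mb{1}_n$ for some scalar $c$. An alternative route to the same conclusion is an extremal argument: pick a node attaining $\max_i y_i$; nonnegativity of each term in its balance forces all its union-graph neighbours to share this maximal value, and connectivity then spreads the maximum to the whole network.

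The main obstacle I anticipate is the passage from the per-time-step, time-varying balance conditions to a single statement over the union graph: one must argue that a genuine equilibrium of \eqref{eq_sol} satisfies the balance simultaneously for all realizations of the neighbour sets $\mc{N}_i(k)$ within a connectivity window, and only then invoke connectivity of $\mc{G}_T(k)$. The sign bookkeeping itself is routine once the oddness, sign-preservation, and monotonicity of Assumption~\ref{ass_nonlin} are in hand.
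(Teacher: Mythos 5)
Your proof is correct, but it takes a genuinely different route from the paper. The paper argues by contradiction: assuming $\nabla F(\widehat{\mb{x}})\notin\mbox{span}(\mb{1}_n)$, it locates a link on a path in $\mc{G}_T$ whose endpoints have unequal gradients and then asserts that the corresponding node's update increment $\sum_{j\in\mc{N}_i}W_{ij}\,g_n\big(g_l(\partial f_i)-g_l(\partial f_j)\big)$ is nonzero. That step is the weak point of the paper's argument: a node may have several neighbours whose positive and negative contributions cancel, so a single nonzero summand does not by itself make the sum nonzero. Your quadratic-form argument closes exactly this loophole: pairing the balance equation with $\partial f_i^*$, symmetrizing via $W_{ij}=W_{ji}$ and the oddness of $g_n$, and observing that each edge term $W_{ij}(\partial f_i^*-\partial f_j^*)\,g_n\big(g_l(\partial f_i^*)-g_l(\partial f_j^*)\big)$ is nonnegative turns the fixed-point condition into a sum of nonnegative terms equal to zero, which legitimately forces every edge term to vanish. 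Your handling of the time-varying topology (the balance holds at every $k$, hence on every link of the union graph $\mc{G}_T$, whose connectivity then propagates the equality) is also the right way to invoke uniform connectivity, and matches the spirit of the paper's appeal to a path in $\mc{G}_T$.

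One caveat: your sign bookkeeping requires that $g_l$ be order-preserving, so that $\partial f_i^*-\partial f_j^*$ and $g_l(\partial f_i^*)-g_l(\partial f_j^*)$ share a sign. Assumption~\ref{ass_nonlin} as stated only lists oddness, sign-preservation, and the sector bounds, which do not by themselves imply monotonicity; the monotone non-decreasing property is assumed explicitly only in Corollary~\ref{cor_xLy} and Lemma~\ref{lem_xLy2} (and holds for the paper's examples of log-quantization and saturation). You should state that you are borrowing that hypothesis, and note that strict monotonicity (or injectivity) of $g_l$ is what gives the "equality iff $\partial f_i^*=\partial f_j^*$" direction of your argument.
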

\begin{proof}
We prove this by contradiction. Suppose that  $\widehat{\mb{x}}$ is the equilibrium of \eqref{eq_sol}, i.e., $\widehat{x}_i(k+1) = \widehat{x}_i(k)$ for $\forall i$, for which $\nabla F(\widehat{\mb{x}}) \notin \mbox{span}(\mb{1}_n)$. This implies that there exists (at least) two agents $a$ and $b$ such that $\partial f_a(\widehat{x}_a) \neq \partial f_b(\widehat{x}_b)$. Following the definition of connectivity, there exists a path over the connected (union) network $\mc{G}_T$ between these two agents  $a,b$. Then, for two neighboring agents $j \in \mc{N}_i$ on this path we have $\partial f_i(\widehat{x}_i) \neq \partial f_j(\widehat{x}_j)$. Therefore, invoking~\eqref{eq_sol}, we have
\begin{align}
	\sum_{j \in \mc{N}_i} W_{ij} g_n\Big(g_l(\partial f_i (k)) - g_l(\partial f_j (k))\Big) \neq 0,
\end{align}
and, therefore, $\widehat{x}_i(k+1) \neq \widehat{x}_i(k)$. This violates the definition of equilibrium for $\widehat{\mb{x}}$, and the conclusion follows.
\end{proof}

\subsection{Solution in the Presence of Network Delays}\label{subsec:delay}

In the previous subsection, we presented a delay-free distributed algorithm that possesses the desirable properties of all-time feasibility and optimal convergence under uniform network connectivity. However, real-world communication networks often experience unavoidable time delays due to various factors such as packet retransmission, processing time, congestion, and physical constraints. These delays can significantly impact algorithm performance and potentially compromise both the feasibility and optimality of traditional distributed solutions.

We now extend our approach to address heterogeneous time delays while preserving the key properties of our delay-free algorithm. This extension is crucial for practical implementations, as it ensures the resilience of our distributed resource allocation method in realistic networking conditions. The following assumption formalizes the delay model we consider in the sequel.

\begin{ass}
The time delays are, in general, heterogeneous at different links and at different time instants $k \in \mathbb{Z}^+$. The time delay at a bidirectional link $(i,j)$ where $i,j \in \{1,2,\ldots,n\}$ is denoted by $\tau_{ij}(k) \in \{0,1,2,\ldots,\overline{\tau}\} \leq \overline{\tau}$ and, in the most general form, is arbitrary, random, and bounded by a maximum possible delay $\overline{\tau} \in \mathbb{Z}^+$.
\end{ass}

Additionally, we assume the information sent from the sender node $i$ to recipient node $j$ is \textit{time-stamped} and, thus, the delay is known to the node~$j$.

Therefore, in the presence of network delays, the proposed agents'  update should satisfy the following dynamics:
\begin{align} \nonumber
x_i(k+1) = x_i(k) -{\eta_\tau} \sum_{j \in \mathcal{N}_i} &\sum_{r=0}^{\overline{\tau}} W_{ij}g_n\Big(g_l(\partial f_i (k-r))\\
&\qquad - g_l(\partial f_j (k-r))\Big)\mathcal{I}_{k-r,ij}(r),
\label{eq_sol_delay}
\end{align}
with $\eta_\tau$ as the step rate in the presence of time-delays.
Besides, to capture the presence and timing of delays in our proposed solution, we employ an indicator function $\mathcal{I}$ defined as
\begin{align} \label{eq_mcI}
\mathcal{I}_{k,ij}(\tau) = \left\{ \begin{array}{ll}
1, & \text{if}~ \tau_{ij}(k) = \tau,\\
0, & \text{otherwise}.
\end{array}\right.
\end{align}

In the worst case of unknown asymmetric time delays, both agents $i,j$ can update their mutual information based on the maximum delay at every link or the maximum possible delay $\overline{\tau}$, i.e., $\mathcal{I}_{k,ij}(\overline{\tau}) = \mathcal{I}_{k,ji}(\overline{\tau}) = 1$.  Later, we will also make explicit that $\eta_\tau$ depends on the time-delay bound $\overline{\tau}$.

Using the same arguments as in Lemma~\ref{lem_feas} and~\ref{lem_eqb}, we can establish that both the all-time feasibility and equilibrium properties extend naturally to the time-delayed dynamics~\eqref{eq_sol_delay}. This connection arises from a fundamental insight: \emph{delayed information exchange effectively manifests as temporal changes in the network topology}.

Consider a concrete example: when information exchange between nodes~$i$ and $j$ experiences a delay from time $k$ to time $k+\tau$, we can conceptualize this as removing link $(i,j)$ from $\mathcal{G}(k)$ and adding it to $\mathcal{G}(k+\tau)$ instead. This topological reframing allows us to leverage our existing analysis framework.

By recasting time delays as topology modifications, we can directly extend the properties established for the original dynamics~\eqref{eq_sol} to the delay-inclusive dynamics~\eqref{eq_sol_delay}. The mathematical structure remains consistent, with the primary difference being the temporal distribution of connections.

Furthermore, since all delays are bounded by $\overline{\tau}$, we can guarantee that the effective topology associated with the delayed dynamics maintains uniform connectivity over a slightly extended time window. Specifically, the union network $\mathcal{G}_{T+\overline{\tau}}(k) = \cup_{k}^{k+T+\overline{\tau}} \mathcal{G}(k)$ remains connected, preserving the essential connectivity property required for convergence, albeit over a longer time horizon that accommodates the maximum possible delay.

This topological perspective on delays provides a unifying framework for analyzing the resilience of our algorithm without requiring fundamentally different mathematical machinery for the delayed case.

\subsection{The Complete Resilient DRA Algorithm}\label{subsec:complete}

Our proposed distributed allocation and scheduling algorithm is presented in Algorithm~\ref{alg_ac}. The method follows a structured approach with two key phases per iteration:

First, each node exchanges gradient information derived from its local objective function with neighboring nodes across the network $\mathcal{G}$. This information sharing forms the foundation of our distributed approach, enabling coordination without centralized control.

Second, nodes update their states according to the resilient DRA dynamics in~\eqref{eq_sol_delay}, which explicitly accounts for heterogeneous time delays across communication links. This update mechanism iteratively guides the system toward the optimal solution of the constrained optimization problem while maintaining all-time feasibility.

\begin{rem}
For the special case where no communication delays are present (i.e., $\tau_{ij}(k) = 0$ for all links), our algorithm naturally simplifies to the delay-free dynamics in~\eqref{eq_sol}, demonstrating the unified nature of our theoretical framework. This adaptive behaviour allows the algorithm to seamlessly operate in both ideal and delay-prone network environments.
\end{rem}


\begin{algorithm} \label{alg_ac}
\textbf{Given:}  Uniformly connected network $\mc{G}$,  $W$,  $\eta$, cost $f_i(\cdot)$  \\	
\textbf{Initialization:} $k=0$, $\mb{x}(0) \in \mc{S}_b$
\\
\While{termination criteria NOT hold}{
	Each node $i$ receives gradient data from $j\in \mc{N}_i$ \;
	Each node $i$	updates $x_i(k+1)$  via Eq.~\eqref{eq_sol_delay}\;
	Each node $i$ shares its gradient information with its neighbors $j \in \mc{N}_i$ \;
	$k \leftarrow k+1$ \;
}
\textbf{Return:}  Assigned resources $\mb{x}(k)$ and associated cost $F(\mb{x}(k))$\;	
\caption{The Resilient DRA Algorithm. }
\end{algorithm}

\begin{rem}
	In the absence of node nonlinearity (i.e., linear mapping $g_n(u)=u$), the algorithm converges over general weight-balanced directed networks.
\end{rem}

To complete our algorithmic analysis, we now establish the computational requirements of our approach through a formal complexity characterization.

\begin{prop}[Computational Complexity] \label{prop:complexity}
The computational complexity of the proposed DRA algorithm is $\mathcal{O}(kn^2)$ for $k$ iterations in a network of $n$ agents when $g_n$ and $g_l$ are linear functions. When $g_n$ and $g_l$ are nonlinear, the complexity becomes $\mathcal{O}(kn^2 + kn\cdot C(g))$, where $C(g)$ denotes the computational cost of evaluating the nonlinear functions.
\end{prop}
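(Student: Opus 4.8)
The plan is to establish the bound by a direct per-iteration operation count of Algorithm~\ref{alg_ac}, whose dominant cost is the state update~\eqref{eq_sol_delay} (which collapses to~\eqref{eq_sol} in the delay-free case). I would fix a single iteration, tally the scalar arithmetic operations and the nonlinear function evaluations separately, and then multiply by $k$, since the per-iteration cost is independent of the iteration index.

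First I would bound the local work. Each node $i$ evaluates its scalar gradient $\partial f_i$ once, contributing $O(n)$ work across the network. Next I would examine the neighbor aggregation $\sum_{j\in\mc{N}_i} W_{ij}(\cdot)$: for node $i$ this involves $|\mc{N}_i|$ weighted terms, and summing over all nodes gives $\sum_{i=1}^n |\mc{N}_i| = 2|\mc{E}| \le n(n-1)$ scalar multiply--add operations, i.e. $O(n^2)$ per iteration. This is nothing more than the cost of the Laplacian-type matrix--vector product underlying the tracking identity~\eqref{eq_sol_L}. For the delayed dynamics I would observe that the inner sum over $r=0,\dots,\overline{\tau}$ is annihilated by the indicator $\mc{I}_{k-r,ij}(r)$ everywhere except at the single value $r=\tau_{ij}$, so the double sum collapses to one effective term per neighbor and inflates the count by at most the constant factor $\overline{\tau}+1$; the delay therefore leaves the $O(n^2)$ order intact. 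With these pieces, the linear case $g_n(u)=u,\ g_l(u)=u$ follows at once: the maps add no work, the per-iteration cost is $O(n^2)$, and over $k$ iterations we obtain $O(kn^2)$.

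For the nonlinear case I would add the evaluation cost of $g_n$ and $g_l$, writing $C(g)$ for the maximal per-call cost. The key step is to argue that only $O(n)$ genuinely expensive evaluations are charged per iteration rather than $O(n^2)$: the value $g_l(\partial f_i)$ depends solely on node $i$, so it is computed once and cached, yielding $n$ evaluations of $g_l$ reused across all incident edges, and the node-level map contributes likewise $O(n)$ evaluations. Thus the nonlinearities add $O(n\cdot C(g))$ per iteration on top of the $O(n^2)$ arithmetic, giving $O(kn^2 + kn\cdot C(g))$ over $k$ iterations.

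The main obstacle I anticipate is precisely this counting of nonlinear evaluations. A literal reading of~\eqref{eq_sol_delay} applies $g_n$ once per edge, which in a dense graph would suggest an $O(n^2\,C(g))$ contribution; reconciling this with the stated $O(kn\cdot C(g))$ term requires fixing the computational model carefully. The resolution I would make explicit is twofold: the argument of $g_l$ is node-local and hence memoized, so $g_l$ is never re-evaluated along distinct incident edges; and the residual per-edge operations -- the subtraction $g_l(\partial f_i)-g_l(\partial f_j)$, the per-edge application of $g_n$, and the weighting by $W_{ij}$ -- are treated as unit-cost scalar operations already absorbed into the $O(n^2)$ arithmetic, with $C(g)$ reserved for the once-per-node gradient-and-nonlinearity evaluations. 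Stating this cost model precisely, and declaring that $C(g)$ is a bounded per-evaluation cost, is the crux that cleanly separates the $kn\cdot C(g)$ term from the $kn^2$ term.
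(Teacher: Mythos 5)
Your proposal follows essentially the same route as the paper's own proof: a per-iteration tally of gradient computation ($\mathcal{O}(n)$), neighbor communication ($\mathcal{O}(n^2)$ in the worst case), and the state update ($\mathcal{O}(n^2)$ for the weighted sums), multiplied by $k$ iterations, with the delayed sum collapsing to one effective term per neighbor via the indicator. The one place where you genuinely diverge is the accounting of the nonlinear evaluations, and there your treatment is the more careful one. The paper's proof counts $\mathcal{O}(n)$ evaluations of $g_n,g_l$ per agent, arrives at $\mathcal{O}(n^2\cdot C(g))$ across the network, and then asserts this ``can be simplified to'' either $\mathcal{O}(kn^2\cdot C(g))$ or $\mathcal{O}(kn\cdot C(g))$ depending on a per-agent versus network-wide viewpoint -- which does not actually justify the $\mathcal{O}(kn\cdot C(g))$ term in the statement. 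You instead confront the discrepancy head-on: $g_l(\partial f_i)$ is node-local and can be memoized, giving exactly $n$ expensive evaluations per iteration, while the per-edge applications of $g_n$ must either be declared unit-cost (absorbed into the $\mathcal{O}(n^2)$ arithmetic) or the bound degrades to $\mathcal{O}(kn^2\cdot C(g))$ on dense graphs. Making that cost model explicit is precisely what is needed to make the stated bound true, so your version closes a gap that the paper's own proof leaves open; the only caveat is that the memoization convention is an assumption about the implementation rather than a property of the update equation as written.
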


\begin{proof}
We analyze the computational complexity by examining the operations performed at each iteration of Algorithm~\ref{alg_ac} across all nodes:

\begin{enumerate}
    \item Local gradient computation: Each agent $i$ computes $\partial f_i(x_i(k))$, requiring $\mathcal{O}(1)$ operations per agent, for a total of $\mathcal{O}(n)$ operations across the network.

    \item Information exchange: In the worst case, the network is fully connected, and each agent communicates with all other agents. This requires $\mathcal{O}(n)$ message transmissions per agent, for a total of $\mathcal{O}(n^2)$ communication operations.

    \item State update: For each agent $i$, the state update in Equation~\eqref{eq_sol} or~\eqref{eq_sol_delay} requires:
    \begin{itemize}
        \item When $g_n$ and $g_l$ are linear: Computing the weighted sum of gradient differences across all neighbors requires $\mathcal{O}(|\mathcal{N}_i|)$ operations, where $|\mathcal{N}_i|$ is the number of neighbors of agent $i$. In the worst case, $|\mathcal{N}_i| = n-1$, leading to $\mathcal{O}(n)$ operations per agent and $\mathcal{O}(n^2)$ operations across all agents.

        \item When $g_n$ and $g_l$ are nonlinear: Each evaluation of $g_n$ or $g_l$ adds a computational cost of $C(g_n)$ or $C(g_l)$, respectively. With $\mathcal{O}(n)$ such evaluations per agent, this adds $\mathcal{O}(n \cdot C(g))$ operations per agent and $\mathcal{O}(n^2 \cdot C(g))$ across all agents, where $C(g) = \max\{C(g_n), C(g_l)\}$.
    \end{itemize}
\end{enumerate}

For a single iteration, the complexity is dominated by the state update step, resulting in $\mathcal{O}(n^2)$ when $g_n$ and $g_l$ are linear, and $\mathcal{O}(n^2 + n^2 \cdot C(g)) = \mathcal{O}(n^2 \cdot (1+C(g)))$ when they are nonlinear.

Over $k$ iterations, the total computational complexity becomes $\mathcal{O}(kn^2)$ for the linear case and $\mathcal{O}(kn^2 \cdot (1+C(g)))$ for the nonlinear case, which can be simplified to $\mathcal{O}(kn^2 + kn^2 \cdot C(g))$ or $\mathcal{O}(kn^2 + kn \cdot C(g))$ depending on whether we consider the per-agent or network-wide complexity for the nonlinear function evaluations.
\end{proof}

\begin{rem}[Resilience to Network Disruptions] \label{rem:resilience}
Algorithm~\ref{alg_ac} significantly advances the state-of-the-art in distributed resource allocation through two critical innovations that enhance resilience to network disruptions:

\begin{enumerate}
    \item \textbf{Relaxed Weight Matrix Requirements:} Our proposed dynamics in~\eqref{eq_sol} require only that the weight matrix $W$ be symmetric, eliminating the more restrictive weight-stochastic condition imposed by previous approaches~\cite{boyd2006optimal,falsone2020tracking,rikos2021optimal,Carli_ADMM,wang2020dual}. This relaxation provides substantial benefits during link failures:

    \begin{itemize}
        \item No weight compensation strategies are needed when links fail or network topology changes;
        \item Unlike stochastic weight matrices that necessitate complex redistribution mechanisms~\cite{6426252,cons_drop_siam} after topology changes; and
        \item Weight-balanced conditions are naturally preserved under link removals, maintaining algorithm stability.
    \end{itemize}

    Figure~\ref{fig_remov} illustrates this advantage: when the red link fails, the network loses stochasticity (observe weights at nodes $v_1$ and $v_2$), but maintains weight-balance, allowing our algorithm to proceed without reconfiguration.

    \begin{figure}[h]
        \centering
        \includegraphics[width=1.6in]{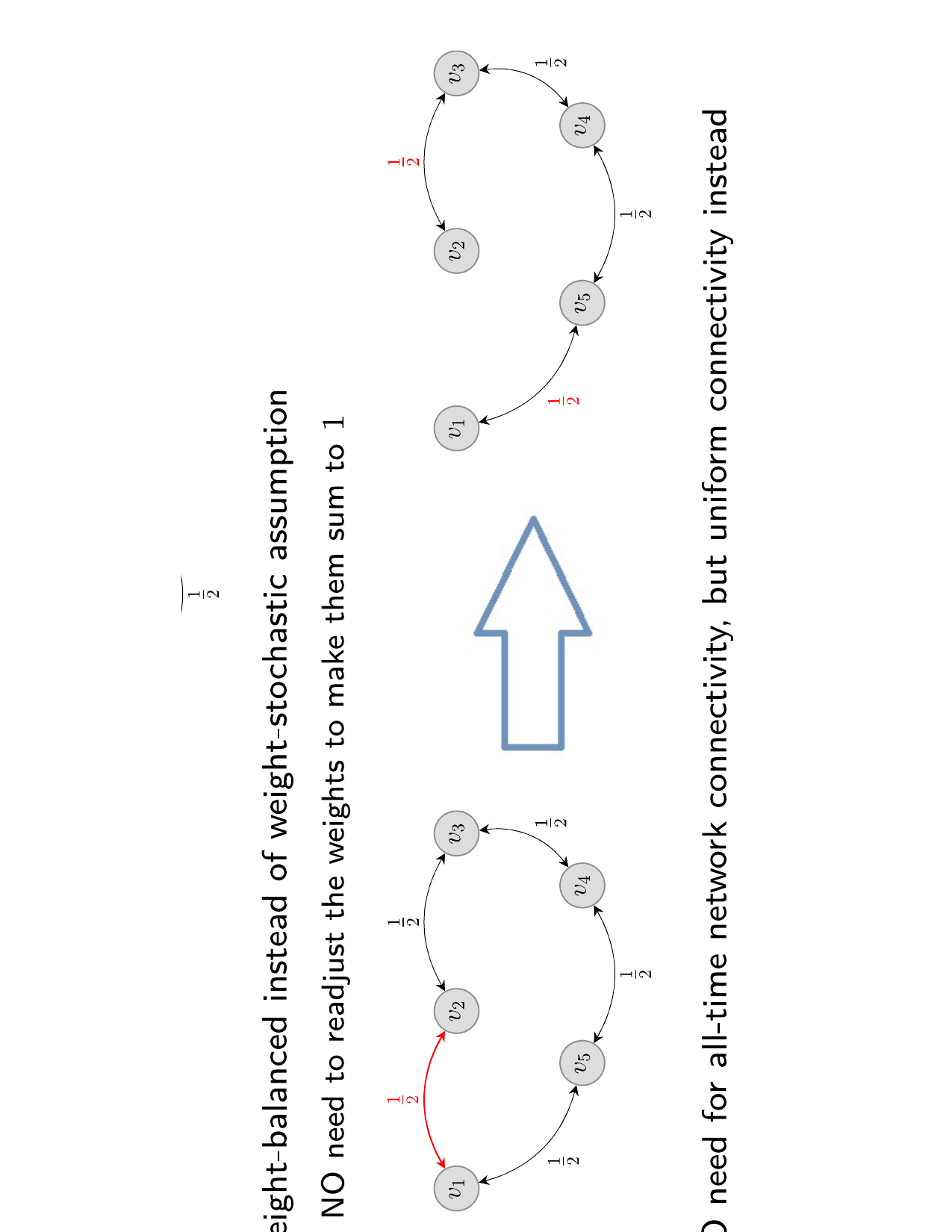}
        \includegraphics[width=1.6in]{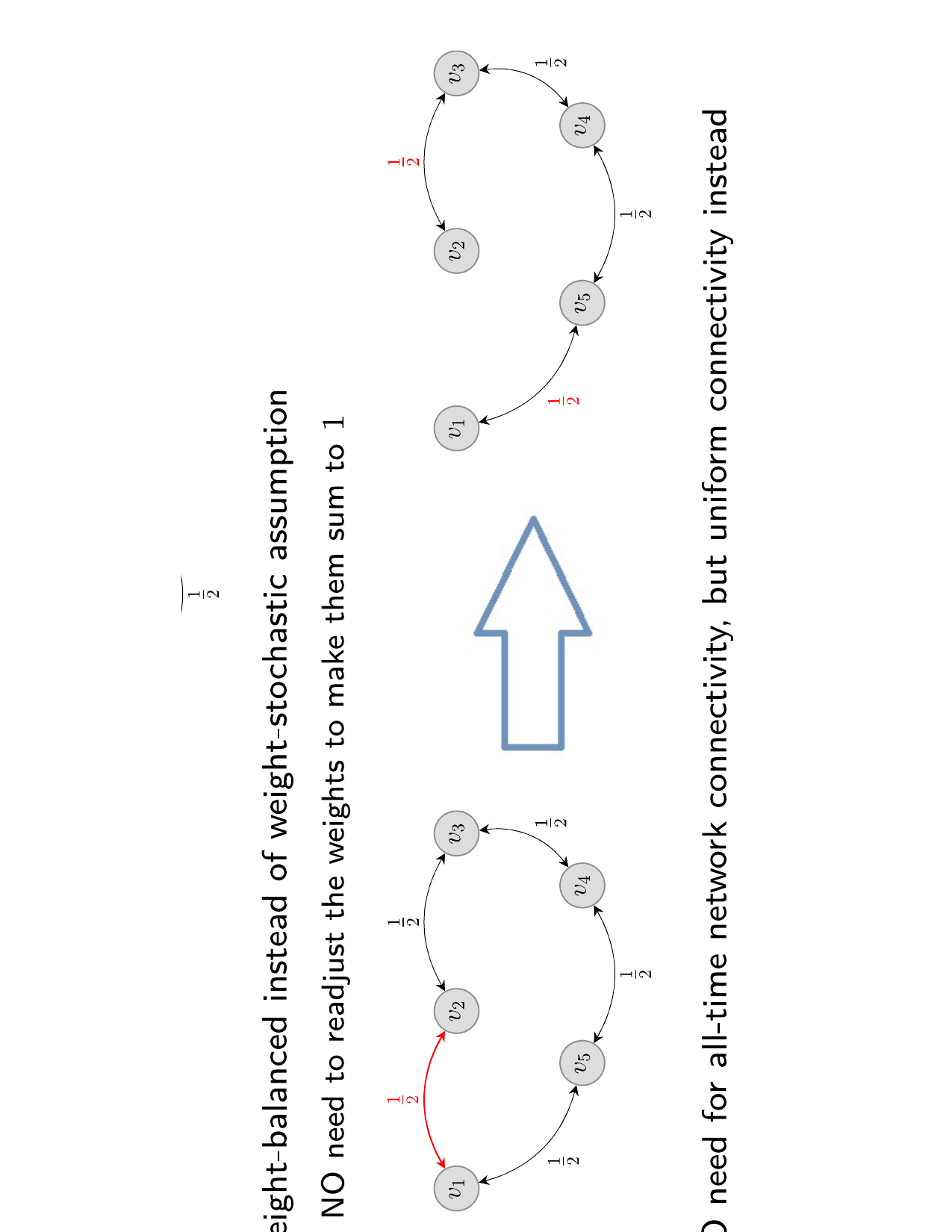}
   \caption{Comparison of network properties under link failure. Left: Original network with both weight-balance and weight-stochasticity properties. Right: After failure of the red link, weight-stochasticity is compromised at nodes $v_1$ and $v_2$, while weight-balance is maintained, illustrating our algorithm's inherent resilience to topology changes.}
        \label{fig_remov}
    \end{figure}

    \item \textbf{Uniform Connectivity Guarantee:} Our algorithm converges under the significantly weaker condition of uniform connectivity rather than requiring all-time connectivity as in~\cite{boyd2006optimal,cherukuri2016initialization,rikos2021optimal,cdc_dtac,banjac2019decentralized,jiang2022distributed,wang2020dual,Carli_ADMM,falsone2020tracking}. This advancement enables the following:

    \begin{itemize}
        \item Tolerance of temporary network disconnections caused by high packet drop rates;
        \item Successful operation in sparse or intermittently connected network topologies; and
        \item Guaranteed convergence even when connectivity is only established periodically every $T \in \mathbb{Z}_{\geq 0}$ timesteps.
    \end{itemize}
\end{enumerate}
\end{rem}

In Section~\ref{sec_conv},  we rigorously prove that the proposed relaxed connectivity requirement is sufficient to ensure algorithm convergence while dramatically improving resilience to practical network imperfections. Thus, maintaining optimal performance in realistic networking environments where disruptions are common, representing a significant practical advancement over existing approaches.

\section{Convergence Analysis} \label{sec_conv}
In this section, we present a comprehensive convergence analysis of our proposed resilient distributed resource allocation algorithm (Algorithm~\ref{alg_ac}). First, in Section 5.1, we establish fundamental convergence guarantees over uniformly-connected networks, deriving explicit bounds on the step-rate parameter to ensure optimal convergence. Next, in Section 5.2, we extend our analysis to networks experiencing random link failures, leveraging percolation theory to quantify resilience thresholds and convergence properties under various failure rates. Finally, in Section 5.3, we address the critical challenge of heterogeneous time delays across communication links, providing theoretical guarantees for convergence despite communication latency and developing practical bounds on maximum tolerable delays.

\subsection{General Proof of Convergence} \label{sec_conv1}
In the following theorem, we prove that Algorithm~\ref{alg_ac} converges to the optimal solution when the step-rate $\eta$ is sufficiently small, with the specific bound on $\eta$ dependent on the spectral properties $\lambda_{2T}$ and $\lambda_{nT}$ of the Laplacian matrix associated with the uniformly-connected network $\mathcal{G}_T$.

\begin{theorem} \label{thm_conv}
Under Assumptions~\ref{ass_nonlin} (sector-bound nonlinearity) and Assumption~\ref{ass_net} (uniformly-connected network), Algorithm~\ref{alg_ac} converges to the optimal solution of problem~\eqref{eq_dra} when the step-rate satisfies
\begin{equation}
\eta < \overline{\eta} := \frac{\kappa_n \kappa_l \lambda_{2T}}{u \lambda_{nT}^2 K_n^2 K_l^2 (T+1)},
\end{equation}
where $\lambda_{2T}$ and $\lambda_{nT}$ are the smallest non-zero and largest eigenvalues of the Laplacian matrix associated with $\mathcal{G}_T$, respectively.
\end{theorem}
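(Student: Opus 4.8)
The plan is to use the global cost $F$ as a Lyapunov function and show that it decreases along the trajectory until $\nabla F(\mb{x}(k))$ becomes aligned with $\mb{1}_n$. First I would confirm that the minimizer $\mb{x}^*$ of~\eqref{eq_dra} is exactly the equilibrium of~\eqref{eq_sol}: strict convexity of $F$ makes the KKT point unique, characterized by $\nabla F(\mb{x}^*)=\lambda^*\mb{1}_n$ and $\mb{1}_n^\top\mb{x}^*=b$, which is precisely the equilibrium identified in Lemma~\ref{lem_eqb} combined with the all-time feasibility of Lemma~\ref{lem_feas}. Consequently it suffices to drive the gradient dispersion $\overline{\nabla F}:=\nabla F-\tfrac{\mb{1}_n^\top\nabla F}{n}\mb{1}_n$ to zero.

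Writing $\mb{\Delta}(k)$ for the aggregate nonlinear update vector in~\eqref{eq_sol}, so that $\mb{x}(k+1)-\mb{x}(k)=-\eta\,\mb{\Delta}(k)$, I would apply the descent lemma associated with $2u$-smoothness to obtain
\begin{align} \nonumber
F(\mb{x}(k+1))-F(\mb{x}(k)) \leq -\eta\,\nabla F(k)^\top\mb{\Delta}(k) + u\,\eta^2\,\|\mb{\Delta}(k)\|_2^2.
\end{align}
Two bounds then control the right-hand side. For the first-order term, symmetrizing the sum with $W_{ij}=W_{ji}$ and using that $g_n\circ g_l$ is sign-preserving and monotone (Assumption~\ref{ass_nonlin}) shows that each summand is nonnegative and, through the sector constants $\kappa_n,\kappa_l$ and the Laplacian-tracking property~\eqref{eq_sol_L}, yields the per-step estimate $\nabla F(k)^\top\mb{\Delta}(k)\geq \kappa_n\kappa_l\,\nabla F(k)^\top L(k)\nabla F(k)$. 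For the quadratic term, the upper sector bound in~\eqref{eq_sol_L} together with the spectral estimate $\|L(k)\nabla F\|_2\leq\lambda_{nT}\|\overline{\nabla F}\|_2$ (valid since $L(k)\preceq L_T$) gives $\|\mb{\Delta}(k)\|_2^2\leq \mc{K}_n^2\mc{K}_l^2\lambda_{nT}^2\|\overline{\nabla F(k)}\|_2^2$, resting on the sector-bound spectral estimates of Lemma~\ref{lem_xLy2} and Corollary~\ref{cor_xLy}.

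The main obstacle is uniform connectivity: at a single step $\mc{G}(k)$ may be fragmented, so $\nabla F(k)^\top L(k)\nabla F(k)$ can vanish even when $\overline{\nabla F}\neq 0$, and no single-step descent is guaranteed. To resolve this I would sum the descent inequality over a window $[k,k+T]$ and relate $\sum_{r=0}^{T}\nabla F(k+r)^\top\mb{\Delta}(k+r)$ to the union Laplacian $L_T=\sum_{r=0}^{T}L(k+r)$ of the connected graph $\mc{G}_T$. Since each increment is of order $\eta$ and $\nabla F$ is $2u$-Lipschitz, the gradient is nearly constant across the window, so up to terms that are higher order in $\eta$ the accumulated first-order term is bounded below, via Corollary~\ref{cor_xLy_union}, by $\kappa_n\kappa_l\lambda_{2T}\|\overline{\nabla F(k)}\|_2^2$; accumulating the $T+1$ quadratic terms, each dominated using $\lambda_{nT}$, produces the factor $(T+1)$. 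Combining the two estimates gives
\begin{align} \nonumber
F(\mb{x}(k+T+1))-F(\mb{x}(k)) \leq -\eta\big(\kappa_n\kappa_l\lambda_{2T}-u\,\eta\,(T+1)\mc{K}_n^2\mc{K}_l^2\lambda_{nT}^2\big)\|\overline{\nabla F(k)}\|_2^2,
\end{align}
which is strictly negative precisely when $\eta<\overline{\eta}$.

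Finally, the windowed monotone decrease of $F$, bounded below by $F(\mb{x}^*)$, forces $\|\overline{\nabla F(k)}\|_2\to 0$; hence $\nabla F\to\mathrm{span}(\mb{1}_n)$, and by strict convexity together with the persistent feasibility of Lemma~\ref{lem_feas} the iterates converge to $\mb{x}^*$. I expect the near-constancy-of-gradient argument over the window, and the careful bookkeeping of the $O(\eta^2)$ cross-terms it generates (which is exactly what must be absorbed to recover the clean $(T+1)$ factor appearing in $\overline{\eta}$), to be the delicate part of the proof.
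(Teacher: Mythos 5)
Your proposal follows essentially the same route as the paper's proof: a windowed Lyapunov descent on $F$ using the $2u$-smoothness descent lemma, the sector-bound Laplacian-tracking property \eqref{eq_sol_L}, and the spectral estimates of Lemma~\ref{lem_xLy2} and Corollary~\ref{cor_xLy_union} on the union graph $\mc{G}_T$, arriving at the identical threshold $\overline{\eta}$. If anything, you are more explicit than the paper about the one delicate step -- controlling the drift of $\nabla F$ across the $T$-step window when aggregating the per-step bounds into the union-Laplacian estimate -- which the paper absorbs directly into the inequality \eqref{eq_proof_dx} without comment.
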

\begin{proof}
The all-time feasibility of the solution has been established in Lemma~\ref{lem_feas}. To prove convergence, we define the residual function $\overline{F}(k) := F(\mathbf{x}(k))-F(\mathbf{x}^*)$ and the state difference $\delta \mathbf{x}(k) := \mathbf{x}(k+T)-\mathbf{x}(k)$.

Using the smoothness property $0 < \frac{d^2f_i}{dx_i^2} < 2u$, we can bound the change in the objective function as
\begin{align}
	F(\mathbf{x}(k+T)) \leq F(\mathbf{x}(k))
	+ \nabla F(\mathbf{x}(k))^\top \delta \mathbf{x}(k) +  u\delta \mathbf{x}(k)^\top \delta \mathbf{x}(k).
	\label{eq_taylor_2}
\end{align}

To establish that the residual is non-increasing, i.e., $\overline{F}(k+T) \leq \overline{F}(k)$, it suffices to show that
\begin{align} \label{eq_proof1}
	\nabla F(\mathbf{x}(k))^\top \delta \mathbf{x}(k) +  u\delta \mathbf{x}(k)^\top \delta \mathbf{x}(k)  \leq 0.
\end{align}

For notational simplicity, we omit the time index $k$ in the remainder of the proof. Define the dispersion parameter $\xi := \nabla F - \frac{\mathbf{1}_n^\top \nabla F}{n} \mathbf{1}_n \in \mathbb{R}^n$, which measures the deviation of the gradient from consensus.

From the proposed dynamics \eqref{eq_sol}, the Laplacian-tracking property \eqref{eq_sol_L}, and the sector-bound conditions on $g_n(\cdot)$ and $g_l(\cdot)$, we obtain
\begin{align} \label{eq_proof_dx}
	-\eta(T+1)\kappa_n\kappa_l L\nabla F \geq \delta \mathbf{x} \geq -\eta(T+1)K_n K_l L\nabla F.
\end{align}

This allows us to bound the quadratic term in \eqref{eq_proof1} as follows:
\begin{align}
	\delta \mathbf{x}^\top \delta \mathbf{x} &\leq \eta^2 (T+1)^2 K_n^2 K_l^2 \nabla F^\top L^\top L \nabla F \nonumber \\
	&\leq \eta^2 (T+1)^2 K_n^2 K_l^2 \lambda_{nT}^2 \xi^\top\xi,
	\label{eq_proof_dx2}
\end{align}
where the second inequality follows from Lemma~\ref{lem_xLy} and Corollary~\ref{cor_xLy_union}.

Similarly, for the first term in \eqref{eq_proof1}, Lemma~\ref{lem_xLy2} and Corollary~\ref{cor_xLy_union} yield
\begin{align} \label{eq_proof_df}
	\nabla F^\top \delta \mathbf{x} \leq -\kappa_n \kappa_l \eta (T+1) \lambda_{2T} \xi^\top\xi.
\end{align}

Combining \eqref{eq_proof_dx2} and \eqref{eq_proof_df}, the condition in \eqref{eq_proof1} is satisfied when
\begin{align}
	\label{eq_proof_rho}
	(-\kappa_n \kappa_l \eta \lambda_{2T} (T+1) + u \lambda_{nT}^2 K_n^2 K_l^2 \eta^2(T+1)^2)\xi^\top\xi \leq 0.
\end{align}

This inequality holds strictly when
\begin{align} \label{eq_eta}
	\eta < \frac{\kappa_n \kappa_l \lambda_{2T}}{u \lambda_{nT}^2 K_n^2 K_l^2 (T+1)}.
\end{align}

Thus, under the stated condition on the step-rate $\eta$, the residual $\overline{F}(k)$ serves as a discrete Lyapunov function that decreases monotonically. The equality in \eqref{eq_eta} is attained only when $\xi = \mathbf{0}_n$, which corresponds to the equilibrium condition $\nabla F \in \text{span}(\mathbf{1}_n)$; thus, completing the proof.
\end{proof}

Note that all the parameters involved in Eq.~\eqref{eq_eta}, including the ones related to the nonlinear mappings, are readily obtainable and thus one can easily select the step size $\eta$ for convergence.

\begin{rem} \label{rem_spectrum}
    The literature on distributed systems and consensus algorithms provides various characterizations of Laplacian spectral properties based on graph topology~\cite{SensNets:Olfati04,graph_handbook}. For instance, \cite{SensNets:Olfati04} relates the spectral range to node degrees, while for a network with diameter $e_{\mathcal{G}}$, a lower bound on algebraic connectivity can be established as $\lambda_2(\mathcal{G}) \geq \frac{1}{ne_{\mathcal{G}}}$~\cite[p. 571]{graph_handbook}. These spectral bounds are particularly valuable for predicting algorithm performance in practical network deployments.
\end{rem}

    Notice that the convergence rate of Algorithm~\ref{alg_ac} is governed by three key factors: (\emph{i}) the step-rate parameter $\eta$, which must be properly calibrated according to Theorem~\ref{thm_conv}; (\emph{ii}) the network connectivity quantified by the algebraic connectivity $\lambda_2$, with denser networks exhibiting faster convergence; and (\emph{iii}) the sector-bound parameters of the nonlinear functions $g_n$ and $g_l$, which affect information flow through the network. These relationships provide practical guidance for tuning algorithm parameters to achieve desired performance in specific application scenarios. 
    For the general case, the convergence rate of the proposed dynamics depends on the properties of the nonlinear mappings $g_n,g_l$ and heterogeneous time-delays. Therefore, the exact convergence rate cannot be easily obtained.

\subsection{Convergence under Link Failure} \label{sec_conv2}In this section, we consider the general case of random link failure in network topology, which frequently occurs due to packet drops in practical communication systems.

Similar to the modelling of time-delays discussed in Section~\ref{subsec:delay}, we model packet drops in network $\mathcal{G}(k)$ at time $k$ as changes in the network topology; specifically, as temporary link removals from the topology. These links may disappear and reconnect dynamically, with different links potentially failing at different time instants.

The key insight in our approach is that uniform network connectivity can be maintained despite link failures if the remaining network preserves connectivity over $T$ iterations, ensuring that $\mathcal{G}_T(k)$ remains connected. This assumption on uniform-connectivity of $\mathcal{G}_T(k)$ is central to our analysis of resilience against link failures.

Recall that the union network $\mathcal{G}_T(k) = \cup_{k}^{k+T} \mathcal{G}(k)$ represents the combined topology of all links present at any point over the time interval $T$. For a given~$T$, a link between nodes $i$ and $j$ is removed from $\mathcal{G}_T(k)$ only if it is absent in all $T+1$ network instances $\mathcal{G}(k),\mathcal{G}(k+1),\ldots,\mathcal{G}(k+T)$. In the extreme case, one can conceptualize the union network $\mathcal{G}_T(k)$ as having $T+1$ potential instances of each link between nodes $i$ and $j$. A link $(i,j)$ is removed from $\mathcal{G}_T(k)$ only if all these $T+1$ instances fail.

Now, let $p_l$ denote the probability of link failure in $\mathcal{G}(k)$ at any given time. Intuitively, the probability that link $(i,j)$ fails in the union network $\mathcal{G}_T(k)$ is~$p_l^{T+1}$, representing the likelihood of failure across all $T+1$ time instances.

Following the definition of bond-percolation threshold $p_c$ from Section~\ref{sec_perc}, when $p_l < p_c$, the network $\mathcal{G}(k)$ remains connected with probability 1 despite random link failures. However, the challenging case occurs when $p_l > p_c$, where instantaneous connectivity is lost. This scenario is addressed in the following theorem.

\begin{theorem}[Resilience to Link Failures] \label{thm_drop}
Let $\mathcal{G}(k)$ be a connected network with bond-percolation threshold $p_c$. Under random link failures occurring with probability $p_l$, where $p_c < p_l < 1$, the union network $\mathcal{G}_{T}(k)$ maintains uniform connectivity with probability 1 for any time window $T$ satisfying
\begin{align} \label{eq_B*}
p_l^{T+1} < p_c.
\end{align}
\end{theorem}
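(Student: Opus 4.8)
The plan is to collapse the time-varying analysis of the union network $\mc{G}_T(k)$ into a question about a single static bond-percolated copy of the base topology $\mc{G}$, and then invoke the percolation-threshold dichotomy established in Section~\ref{sec_perc}. The governing observation, already isolated in the discussion preceding the statement, is that an edge is absent from $\mc{G}_T(k)$ only if it is absent from \emph{every} one of the $T+1$ snapshots $\mc{G}(k),\mc{G}(k+1),\ldots,\mc{G}(k+T)$. Hence the union network inherits the underlying edge set of $\mc{G}$ but with a strictly amplified survival probability per edge, and the whole argument rests on translating this survival amplification into an effective failure probability that can be compared against $p_c$.

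First I would fix the probabilistic model: each potential link of $\mc{G}$ fails independently with probability $p_l$ at each time step, and these failures are independent across the $T+1$ instances. Under this independence, the event that a given link $(i,j)$ is missing from $\mc{G}_T(k)$ is exactly the intersection of $T+1$ independent single-step failure events, so it occurs with probability $p_l^{T+1}$. Consequently $\mc{G}_T(k)$ is itself a bond-percolated version of $\mc{G}$, with the removal probability $p_l$ replaced by the smaller value $p_l^{T+1}$.

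Next I would apply the bond-percolation phase transition. Because $\mc{G}_T(k)$ is built on the same underlying topology as $\mc{G}$, it carries the same percolation threshold $p_c$. The hypothesis $p_l^{T+1} < p_c$ therefore places the union network in the subcritical regime, where---by Kolmogorov's zero-one law as recalled in Section~\ref{sec_perc}---a giant component spanning all of $\mc{V}$ exists with probability $1$ (in the a.a.s.\ sense for ER models). Since the failure process is stationary in $k$, this holds for every starting time $k$, which is precisely the uniform-connectivity property of Definition~\ref{def1}. Finally, since $p_l<1$ gives $p_l^{T+1}\to 0$ as $T\to\infty$, a finite window satisfying the bound always exists; explicitly any $T > \frac{\log p_c}{\log p_l}-1$ suffices.

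The main obstacle here is conceptual rather than computational: one must justify that the effective failure probability over the window collapses exactly to the product $p_l^{T+1}$, which hinges on independence of link failures across time, and that the relevant threshold for $\mc{G}_T(k)$ is still $p_c$ rather than some window-dependent quantity. Both points follow once we recognize that taking the union over the horizon $T$ is a per-edge survival amplification which leaves the base topology---and therefore its percolation threshold $p_c$---untouched, while merely rescaling the removal probability from $p_l$ down to $p_l^{T+1}$.
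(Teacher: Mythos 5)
Your proposal is correct and follows essentially the same route as the paper: compute the effective per-link failure probability over the window as $p_l^{T+1}$, observe that the hypothesis places this below the threshold $p_c$ of the (unchanged) base topology, and invoke the bond-percolation dichotomy to conclude connectivity of $\mc{G}_T(k)$ with probability $1$ for every $k$. Your version is in fact slightly more careful than the paper's, since you make explicit the independence of failures across the $T+1$ snapshots (which the paper only asserts ``intuitively'') and the fact that the union network inherits the same percolation threshold $p_c$.
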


\begin{proof}
Given a network with bond-percolation threshold $p_c$, we first establish the existence of a minimum time window $T^*$ that ensures continued connectivity despite link failures occurring with probability $p_l > p_c$.

Let $T^* \in \mathbb{Z}_{> 0}$ denote the minimum value satisfying $p_l^{1+T^*} < p_c$. Such a value must exist since (i) $p_l < 1$ by assumption, making $p_l^{1+T}$ strictly less than 1 for all $T > 0$, (ii) $\lim_{T \to \infty} p_l^{1+T} = 0$ for $p_l < 1$, and (iii)
$p_c > 0$ by the definition of percolation threshold.

The function $p_l^{1+T}$ has two important properties: (a)
 it is monotonically decreasing with respect to $T$ (since $p_l < 1$), and (b) it is monotonically increasing with respect to $p_l$ (for any fixed $T > 0$).

Therefore, for any $T \geq T^*$, we have $p_l^{1+T} \leq p_l^{1+T^*} < p_c$. This implies that the probability of a link failing in all $T+1$ consecutive time steps is less than the percolation threshold $p_c$.

By the definition of bond percolation (Section~\ref{sec_perc}), when the link failure probability is less than $p_c$, the network remains connected with probability~$1$. Consequently, for any $T \geq T^*$ and $p_c < p_l < 1$, the union network $\mathcal{G}_{T}(k)$ maintains connectivity with probability 1, thereby ensuring uniform connectivity over the time window $T$.
\end{proof}

\begin{cor}[Convergence Under Link Failure] \label{cor_link_failure}
For a network experiencing random link failures with probability $p_l$ where $p_c < p_l < 1$, Algorithm~\ref{alg_ac} converges to the optimal solution with probability 1 when
\begin{enumerate}
    \item The time window $T$ satisfies $p_l^{T+1} < p_c$ as per Theorem~\ref{thm_drop}, and
    \item The step-rate $\eta$ satisfies the bound in Equation~\eqref{eq_eta} from Theorem~\ref{thm_conv}.
\end{enumerate}
\end{cor}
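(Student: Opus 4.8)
The plan is to treat this corollary as a direct composition of the two preceding results, since it asserts nothing beyond chaining Theorem~\ref{thm_drop} into Theorem~\ref{thm_conv}; the substantive work is entirely in checking that the hypotheses line up and in the probabilistic bookkeeping. First I would invoke Theorem~\ref{thm_drop}: under hypothesis~(1), namely $p_l^{T+1} < p_c$, the union network $\mc{G}_T(k)$ retains connectivity with probability $1$, so the time-varying topology $\{\mc{G}(k)\}$ is uniformly-connected with window $T$ almost surely along the trajectory. This is precisely the statement that Assumption~\ref{ass_net} holds with probability $1$ despite the random link failures.

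Next I would verify that the remaining hypotheses of Theorem~\ref{thm_conv} survive the failures. The key observation, already noted in Remark~\ref{rem:resilience}, is that deleting links from a symmetric $W$ keeps the resulting weight matrix symmetric; hence Lemma~\ref{lem_feas} continues to apply and the iterate stays in $\mc{S}_b$ at every step, so all-time feasibility is preserved regardless of which links drop. Assumption~\ref{ass_nonlin} concerns only the node/link maps $g_n,g_l$ and is untouched by topology changes. Moreover, because $\mc{G}_T(k)$ is connected with probability $1$, its Laplacian $L_T$ has a strictly positive algebraic connectivity $\lambda_{2T}>0$ together with a finite largest eigenvalue $\lambda_{nT}$, so the threshold $\overline{\eta}$ in Eq.~\eqref{eq_eta} is a strictly positive, finite number and hypothesis~(2) is non-vacuous.

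With uniform connectivity in force and $\eta$ chosen as in~(2), Theorem~\ref{thm_conv} then applies verbatim: the residual $\overline{F}(k)=F(\mb{x}(k))-F(\mb{x}^*)$ acts as a discrete Lyapunov function that decreases monotonically over each window of length $T$ until the dispersion $\xi=\nabla F-\frac{\mb{1}_n^\top\nabla F}{n}\mb{1}_n$ vanishes, i.e.\ until $\nabla F\in\mathrm{span}(\mb{1}_n)$, which by Lemma~\ref{lem_eqb} and strict convexity of $F$ identifies the unique minimizer on $\mc{S}_b$. Combining the two probability-$1$ statements, namely that connectivity holds almost surely and that the Lyapunov decrease is deterministic once connectivity is granted, yields convergence to the optimizer with probability $1$.

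The only genuine subtlety, which I would flag rather than belabor, is the probabilistic bookkeeping: the per-window decrease of Theorem~\ref{thm_conv} must be iterated indefinitely, so I need connectivity of \emph{every} consecutive window $\mc{G}_T(k)$, $k\geq 0$, not merely one. Since the failure events share the same uniform margin $p_l^{T+1}<p_c$ independent of $k$, each window is connected in the asymptotically-almost-surely sense established in Section~\ref{sec_perc}, and the chained decrease remains valid throughout; accordingly the ``with probability $1$'' conclusion is to be read in that a.a.s.\ sense. No fundamentally new estimate is required beyond this assembly of the earlier hypotheses.
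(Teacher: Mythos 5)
Your proposal is correct and follows exactly the route the paper intends: the corollary is stated without an explicit proof precisely because it is the direct composition of Theorem~\ref{thm_drop} (uniform connectivity of $\mathcal{G}_T(k)$ with probability $1$ under $p_l^{T+1}<p_c$, i.e.\ Assumption~\ref{ass_net} holds) with Theorem~\ref{thm_conv} (Lyapunov decrease under the step-rate bound~\eqref{eq_eta}). Your additional checks --- that link deletion preserves symmetry of $W$ and hence all-time feasibility, that $\lambda_{2T}>0$ makes the bound non-vacuous, and that the probability-$1$ claim is to be read in the a.a.s.\ sense of Section~\ref{sec_perc} --- are consistent with, and slightly more careful than, the paper's implicit argument.
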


The resilience of our algorithm to link failures comes with a performance trade-off. As the link failure probability $p_l$ increases, the network becomes more sparse, reducing the algebraic connectivity $\lambda_2$. This spectral degradation has direct implications on algorithm performance; specifically, \eqref{eq_eta} indicates that smaller values of $\lambda_2$ necessitate smaller step-rates $\eta$ to maintain convergence guarantees, resulting in slower overall convergence.

Hence, the relationship between link failure rate and convergence speed establishes a quantifiable resilience-performance trade-off that can guide parameter selection in practical deployments. The algorithm can tolerate substantial link failures while maintaining convergence guarantees, albeit at reduced convergence rates that are precisely characterized by the spectral properties of the resulting network topology.

\subsection{Convergence under Time-Delays} \label{sec_conv3}

Latency is an unavoidable challenge in practical communication networks, arising from packet retransmission, processing bottlenecks, and physical transmission constraints. This section analyzes the convergence properties of our proposed algorithm under heterogeneous time delays across the multi-agent network.

For the dynamics defined in~\eqref{eq_sol_delay}, we can extend our theoretical analysis by examining the union network $\mc{G}_{T+\overline{\tau}}(k) = \cup_{j=k}^{k+T+\overline{\tau}} \mc{G}(j)$. This formulation reveals a crucial insight: when the delay-free network $\mc{G}_{T}(k)$ exhibits connectivity over $T$ time steps, we can establish connectivity of $\mc{G}_{T+\overline{\tau}}(k)$ even when communication is subject to heterogeneous time delays bounded by $\overline{\tau}$. This perspective allows us to address arbitrary, time-varying, and heterogeneous delays within a unified theoretical framework.

\begin{theorem} \label{thm:delay_convergence}
Given a uniformly-connected network $\mc{G}_{T}(k)$, Algorithm~\ref{alg_ac} converges to the optimal solution of problem~\eqref{eq_dra} for a sufficiently small step-rate $\eta_{\tau}$ satisfying
\begin{equation} \label{eq:delay_bound}
\eta_{\tau} < \frac{\kappa_n\kappa_l\lambda_{2T}}{u\lambda^2_{nT}K^2_nK^2_l(T + \overline{\tau} + 1)},
\end{equation}
where $\overline{\tau}$ represents the upper bound on heterogeneous time-delays across all communication links.
\end{theorem}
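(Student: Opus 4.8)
The plan is to mirror the convergence argument of Theorem~\ref{thm_conv} almost verbatim, exploiting the topological reframing of delays already developed in Section~\ref{subsec:delay}: a communication delay of $\tau$ on link $(i,j)$ is equivalent to deleting that link from $\mc{G}(k)$ and reinserting it into $\mc{G}(k+\tau)$. Under this reframing, the effective connectivity window simply lengthens from $T$ to $T+\overline{\tau}$, since the union $\mc{G}_{T+\overline{\tau}}(k) = \cup_{j=k}^{k+T+\overline{\tau}} \mc{G}(j)$ absorbs every delayed link instance and inherits connectivity from $\mc{G}_T(k)$. All-time feasibility carries over directly from Lemma~\ref{lem_feas}, because each summand in~\eqref{eq_sol_delay} still involves the sign-preserving odd maps $g_n,g_l$ against the symmetric weights $W_{ij}$, so $\mb{1}_n^\top \mb{x}(k+1) = \mb{1}_n^\top \mb{x}(k)$ holds term-by-term regardless of the delay bookkeeping.

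First I would define $\delta \mb{x}(k) := \mb{x}(k+T+\overline{\tau}) - \mb{x}(k)$ and the residual $\overline{F}(k) := F(\mb{x}(k)) - F(\mb{x}^*)$, exactly as before but over the extended window. Using $2u$-smoothness I would reproduce the quadratic bound
\begin{align} \nonumber
F(\mb{x}(k+T+\overline{\tau})) \leq F(\mb{x}(k)) + \nabla F(\mb{x}(k))^\top \delta \mb{x}(k) + u\, \delta \mb{x}(k)^\top \delta \mb{x}(k),
\end{align}
and then reduce monotone decrease of $\overline{F}$ to showing $\nabla F^\top \delta \mb{x} + u\, \delta \mb{x}^\top \delta \mb{x} \leq 0$. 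The crucial step is accumulating the per-step dynamics~\eqref{eq_sol_delay} over the window: because each delayed exchange contributes a Laplacian-type term against the union graph, summing across the $T+\overline{\tau}+1$ time instants yields the sandwiched estimate
\begin{align} \nonumber
-\eta_\tau(T+\overline{\tau}+1)\kappa_n\kappa_l L_T \nabla F \geq \delta \mb{x} \geq -\eta_\tau(T+\overline{\tau}+1)K_n K_l L_T \nabla F,
\end{align}
which is the delayed analogue of~\eqref{eq_proof_dx}. Applying Corollary~\ref{cor_xLy_union} and Lemma~\ref{lem_xLy2} to the union Laplacian $L_T$ then bounds the quadratic term by $\eta_\tau^2(T+\overline{\tau}+1)^2 K_n^2 K_l^2 \lambda_{nT}^2\, \xi^\top\xi$ and the cross term by $-\kappa_n\kappa_l\eta_\tau(T+\overline{\tau}+1)\lambda_{2T}\,\xi^\top\xi$, where $\xi$ is the gradient dispersion. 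Collecting the two contributions gives a sign condition satisfied precisely when $\eta_\tau$ obeys the bound~\eqref{eq:delay_bound}, and the equality case $\xi = \mb{0}_n$ recovers the equilibrium $\nabla F \in \mbox{span}(\mb{1}_n)$ from Lemma~\ref{lem_eqb}.

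The main obstacle I anticipate is rigorously justifying the window-accumulated inequality~\eqref{eq_proof_dx} in the presence of heterogeneous, time-varying delays. Unlike the delay-free case, the indicator functions $\mc{I}_{k-r,ij}(r)$ distribute each link's contribution across several time slots, so one must verify that summing~\eqref{eq_sol_delay} over the extended window reassembles a clean Laplacian action $L_T \nabla F$ against the union graph rather than a tangle of mismatched gradient evaluations at different delayed times. The argument relies on the fact that over a window of length $T+\overline{\tau}+1$ every scheduled link instance is delivered exactly once (since delays are bounded by $\overline{\tau}$), so the aggregate effect is the same Laplacian-tracking structure~\eqref{eq_sol_L} evaluated on $\mc{G}_{T+\overline{\tau}}(k)$; the gradient terms $\partial f_i(k-r)$ differ from $\partial f_i(k)$ only by higher-order corrections absorbed into the $2u$-smoothness slack. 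Making this accounting precise—showing the delayed evaluations do not corrupt the sector-bound sandwich beyond the factor $(T+\overline{\tau}+1)$—is the technically delicate part, after which the remaining algebra is identical to Theorem~\ref{thm_conv}.
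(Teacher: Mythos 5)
Your proposal follows essentially the same route as the paper's proof, which simply invokes the Lyapunov analysis of Theorem~\ref{thm_conv} over the extended window $T+\overline{\tau}$ after observing that $\mc{G}_{T+\overline{\tau}}(k)$ contains the same set of links (hence the same Laplacian spectrum) as $\mc{G}_{T}(k)$. You actually spell out the window accumulation, the sector-bound sandwich with the factor $(T+\overline{\tau}+1)$, and the delicate bookkeeping of delayed gradient evaluations in more detail than the paper does; the latter issue, which you correctly flag as the technically sensitive step, is passed over silently in the paper's own argument.
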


\begin{proof}
The proof extends the analysis of Theorem~\ref{thm_conv} to the uniformly-connected network $\mc{G}_{T+\overline{\tau}}(k)$. A key observation is that the eigenspectrum of $\mc{G}_{T+\overline{\tau}}(k)$ is identical to that of $\mc{G}_{T}(k)$, since the same set of communication links is distributed over an extended time scale $T + \overline{\tau}$.

Specifically, for a network subject to time delays bounded by $\overline{\tau}$, the set of links $V_{T+\overline{\tau}} = \cup_{j=k}^{k+T+\overline{\tau}} V(j)$ corresponds exactly to the set of links $V_T$ in the delay-free case over time period $T$. This topological equivalence allows us to apply the same Lyapunov analysis as in Theorem~\ref{thm_conv}, with the adjustment that we consider the extended time horizon $T + \overline{\tau}$ instead of $T$.

Following the same sequence of algebraic manipulations with the modified time horizon, we arrive at the convergence condition in~\eqref{eq:delay_bound}, which guarantees that the residual function serves as a discrete Lyapunov function that decreases monotonically.
\end{proof}

The bound in Theorem~\ref{thm:delay_convergence} characterizes the fundamental trade-off between delay tolerance and convergence rate. For a given step-rate $\eta_{\tau}$, we can determine the maximum tolerable delay $\overline{\tau}$ for guaranteed convergence given by
\begin{equation} \label{eq_tau_eta}
\overline{\tau} < \frac{\kappa_n\kappa_l\lambda_{2T}}{u\eta_{\tau}\lambda^2_{nT}K^2_nK^2_l} - 1 - T.
\end{equation}

This expression provides the upper bound on network delays for which Algorithm~\ref{alg_ac} maintains convergence guarantees despite heterogeneous, time-varying delays. In scenarios with unknown but bounded delays, agents can synchronize their information updates at intervals of $\overline{\tau}$ time steps. Under this implementation, all communication links in $\mc{G}(k)$ effectively transfer information to time $k + \overline{\tau}$, preserving uniform connectivity over the original time period $T$. This approach enables a tighter upper bound on the step-rate, analogous to the bound in~\eqref{eq_eta}.

\begin{rem} \label{rem_delay}
Equation~\eqref{eq_tau_eta} reveals a fundamental resilience versus performance trade-off. Specifically, accommodating larger time delays $\overline{\tau}$ necessitates reducing the step-rate $\eta_{\tau}$ to maintain convergence guarantees. This reduction in step-rate inevitably leads to slower convergence. This trade-off provides system designers with a quantitative framework for balancing delay tolerance against convergence speed in practical deployments.
\end{rem}

The analysis presented in this subsection demonstrates that our algorithm can tolerate substantial communication delays while maintaining both feasibility and optimality guarantees -- a critical advancement over existing approaches that require either continuous connectivity or synchronous, delay-free communication.

\section{Simulation} \label{sec_sim}
In this section, we provide simulations to validate the performance of Algorithm~\ref{alg_ac} in the presence of link failure and time delays. For the simulation we optimize, in a distributed way, the following cost function (taken from \cite{doan2017ccta}) at every agent $i$:
\begin{align} \label{eq_nonquad}
f_i(x_i) =  \omega_i(x_i - \alpha_i)^4 + \sigma ([x_i-M_i]^++[m_i-x_i]^+),
\end{align}
with penalty terms $ \sigma ([x_i-M_i]^++[m_i-x_i]^+)$ and $[u]^+ = (\max\{0,u\})^2$ addressing the box constraints. For the simulation, we consider $\sigma=20$ large enough to better address the box constraints with $m_i =1$, $M_i=10$. Other simulation parameters are chosen as follows: $\alpha_i \in (0,2]$, $\omega_i \in (0,0.02]$, and $b=200$ as the sum of resources.

\subsection{Convergence under Dynamic Network Topology}
\begin{figure}[]
\centering
\includegraphics[width=2.5in]{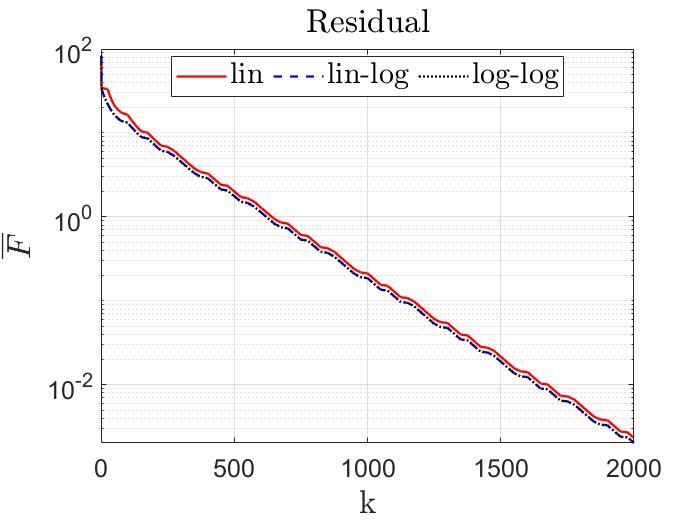}
\includegraphics[width=2.5in]{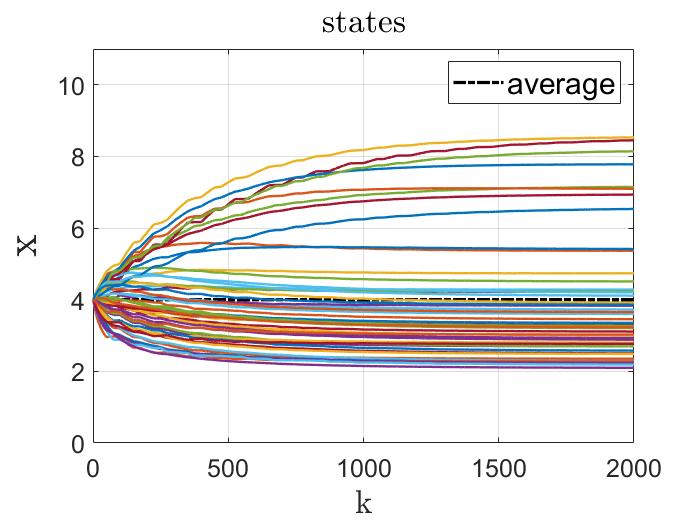}
\caption{Convergence behaviour under dynamic network topology. Left: Evolution of cost residual with penalty term $\sigma([x_i-M_i]^++[m_i-x_i]^+)$ for both linear and nonlinear (log-scale quantization) models in a non-persistently connected network. Right: State trajectories of all agents under log-scale quantization, demonstrating constant average values and confirming the all-time feasibility of the proposed solution.}
\label{fig_dyn}
\end{figure}
\begin{figure}[]
\centering
\includegraphics[width=2.5in]{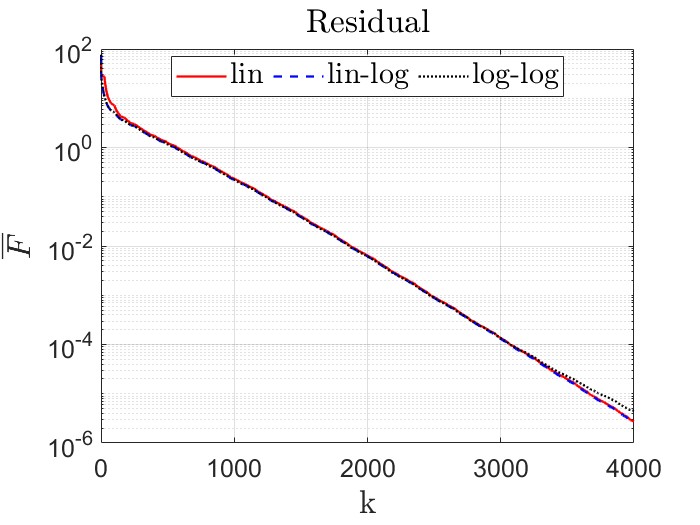}
\includegraphics[width=2.5in]{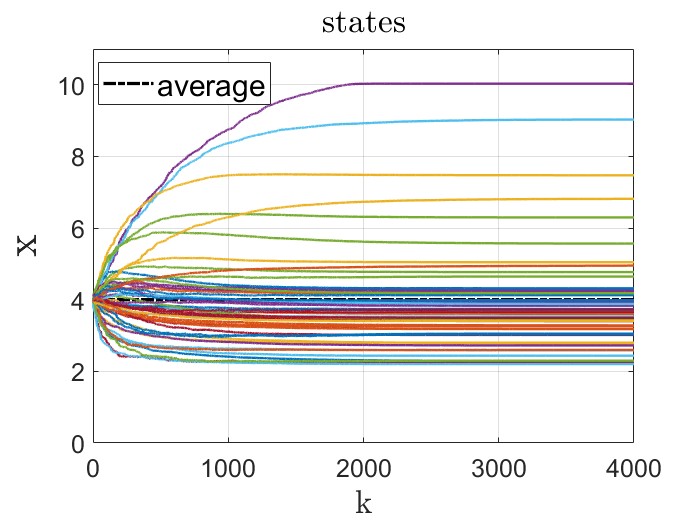}
\caption{Algorithm performance with logarithmic penalty function. Left: Convergence of cost residual with smoothed penalty term $\frac{1}{\mu}(\log(1+\exp(\mu(x_i-M_i)))+\log(1+\exp(\mu(m_i-x_i))))$ under dynamic network conditions. Right: State evolution of all agents maintaining constant average value, confirming all-time feasibility with logarithmic constraints.}
\label{fig_dyn2}
\end{figure}
The multi-agent network is considered as an ER random graph of $n=50$ nodes with four different linking probability $p\in \{20\%,10\%,5\%,1\%\}$ to model a dynamic network with random entries $W_{ij}$. Note that, for $p=1\%$, we have $p<\frac{1}{n-1}$ implying that the network is disconnected, since the network switches between these four possible topologies every $25$ iterations.

For the sector-bound nonlinearity, we consider log-scale quantization defined as $q(x)=\mbox{sgn}(x) \exp(\epsilon[\frac{\log(|x|)}{\epsilon}])$ with $\epsilon$ as the quantization level. It is known that logarithmic quantization satisfies $x(1-\frac{\epsilon}{2}) \leq q(x)\leq x(1+\frac{\epsilon}{2})$.

The distributed solution follows~\eqref{eq_sol} with three different options of nonlinearity at the node and link dynamics: (\emph{i}) \mbox{linear-linear}: $g_n(x)=x$ and $g_l(x)=x$;  (\emph{ii}) linear-logarithmic: $g_n(x)=x$ and  $g_l(x)=q(x)$; and  (\emph{iii}) logarithmic-logarithmic: $g_n(x)=q(x)$ and $g_l(x)=q(x)$.

For our implementation, we choose $\epsilon_l = \frac{1}{8}$ and $\epsilon_n = \frac{1}{1024}$ as the link and node quantization levels, respectively. We set $\eta =0.1$ for the simulation, and the results are shown in Fig.~\ref{fig_dyn}. The average of states $\frac{1}{n} \sum_{i=1}^n x_i$ remains constant equal to $\frac{b}{n}$ throughout the evolution of dynamics, demonstrating the all-time feasibility  of the proposed solution, i.e., $\sum_{i=1}^n x_i(t) = b$ for $t\geq 0$.

We provide an additional simulation using a logarithmic penalty term in the form $\frac{1}{\mu}(\log (1+\exp(\mu (x_i-M_i)))+\log (1+\exp(\mu (m_i-x_i))))$ to address the box constraints. For this penalty term, we set $\mu=5$, which is sufficiently large to properly enforce the box constraints. All other simulation parameters remain the same as in the previous case. The results are presented in Fig.~\ref{fig_dyn2}.

\subsection{Resilience to Link Failure}
\begin{figure*}
\centering
\subfigure[$p_l=0$]
{       \includegraphics[width=1.7in]{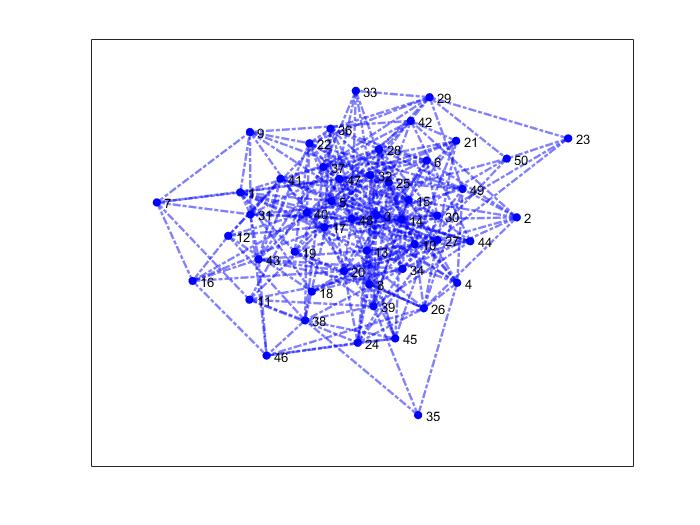}}
\subfigure[$p_l=0.5$]
{ 		\includegraphics[width=1.7in]{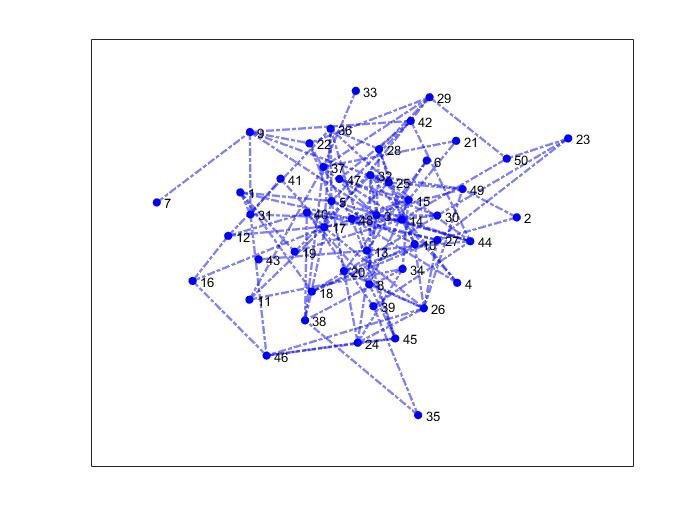}}
\subfigure[$p_l=0.7$]
{  		\includegraphics[width=1.7in]{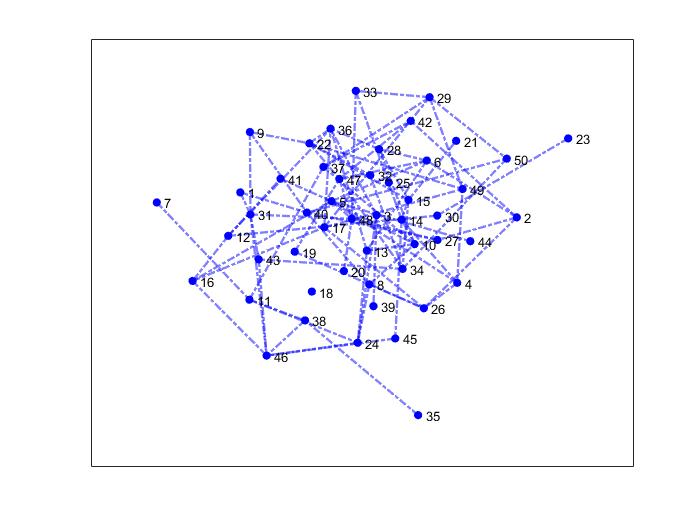}}
\subfigure[$p_l=0.85$]
{  		\includegraphics[width=1.7in]{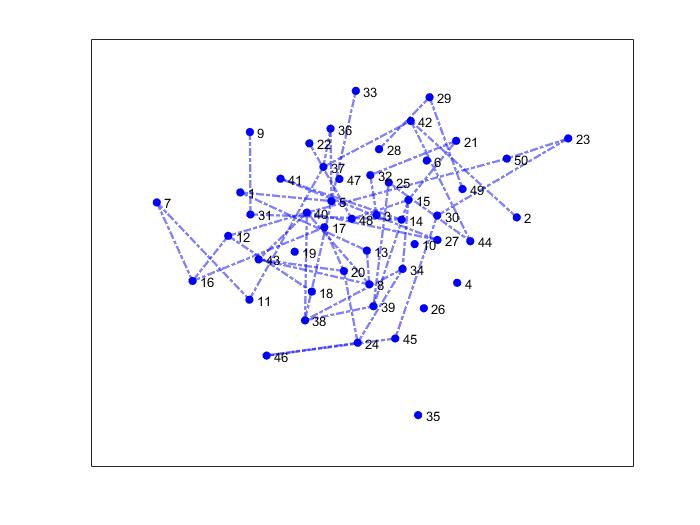}}
\subfigure[$p_l=0.92$]
{ 		\includegraphics[width=1.7in]{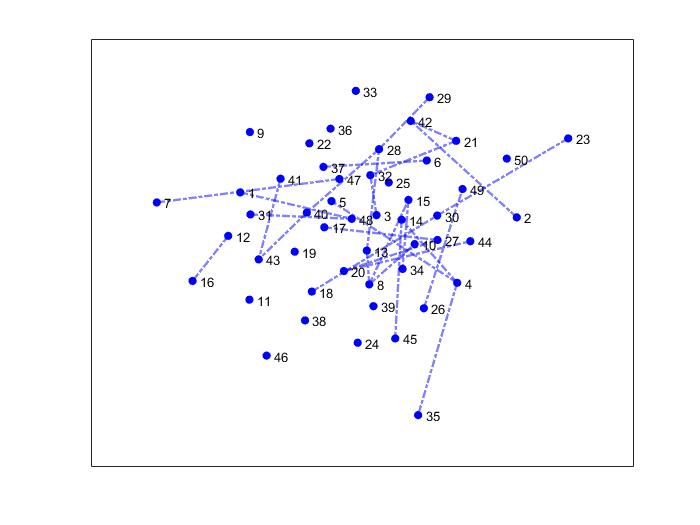}}
\subfigure[$p_l=0.85$, $T=2$]
{ 		\includegraphics[width=1.7in]{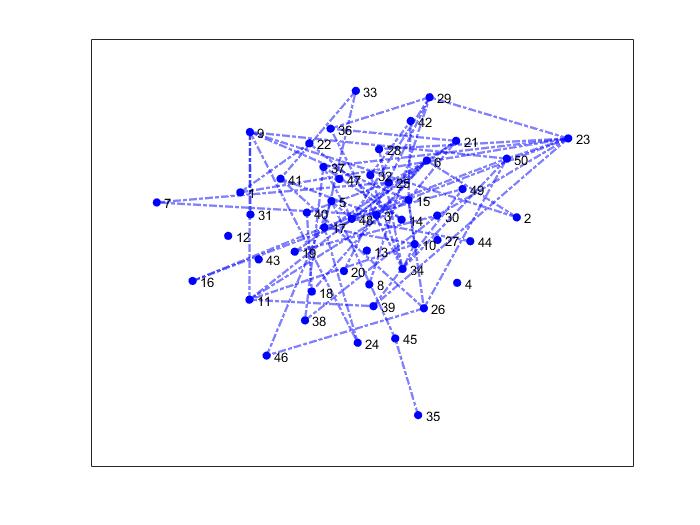}}
\subfigure[$p_l=0.92$, $T=4$]
{ 		\includegraphics[width=1.7in]{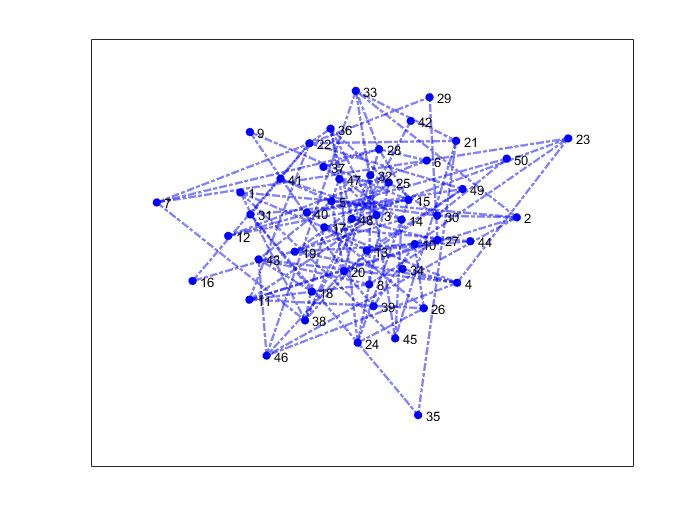}}
\caption{Network topology under varying link failure conditions. (a) Original Erdős--Rényi network with 20\% linking probability. (b)-(c) Networks with failure rates below percolation threshold ($p_l < p_c$), maintaining connectivity with probability 1. (d)-(e) Networks with failure rates exceeding percolation threshold ($p_l > p_c$), becoming disconnected. (f)-(g) Union networks over $T=2$ and $T=4$ time steps respectively, demonstrating uniform connectivity when $p_l^{T+1} < p_c$ despite instantaneous disconnections.}
\label{fig_graphs}
\end{figure*}

Now, we examine an Erdős--Rényi (ER) network with linking probability $p=20\%$ subjected to various link failure rates (i.e., $p_l\in\{50\%$, $70\%$, $85\%$, $92\%\}$) to systematically evaluate network resilience. The bond-percolation threshold for this graph configuration is $p_c=79.5\%$, as determined by~\eqref{eq_pc}. This threshold creates two distinct operational regimes: for $p_l<p_c$, network connectivity is preserved with probability approaching 1 despite random link failures; for $p_l>p_c$, the network loses instantaneous connectivity but maintains uniform connectivity over extended time windows $T$ that satisfy the conditions in Theorem~\ref{thm_drop} and~\eqref{eq_B*}.

Specifically, when $p_l=85\%$ and $p_l=92\%$, uniform connectivity is maintained over $T=2$ and $T=4$ time steps, respectively. Figure~\ref{fig_graphs} illustrates representative network topologies under these different failure probabilities, demonstrating the structural changes that occur as failure rates increase.

To simulate realistic packet drops, our model implements dynamic link failures at each time step. Links randomly disconnect (simulating packet drops) and later reconnect (representing successful packet delivery), while other links may simultaneously fail with probability $p_l$. This creates a highly dynamic network topology with continuously changing connections between agents. As expected, higher link failure rates result in slower convergence of our resilient dynamics due to reduced information exchange across the network, as demonstrated in Figure~\ref{fig_fail} with step rate $\eta=0.2$.

\begin{figure}[]
\centering
\includegraphics[width=2.5in]{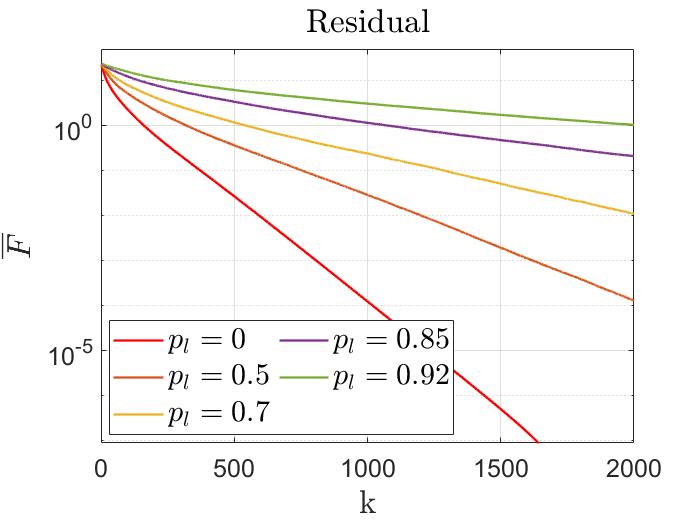}
\caption{Impact of link failure probability on convergence performance. The plot shows cost residual evolution under varying link failure rates (50\%-92\%), demonstrating that higher failure probabilities require longer time windows for uniform connectivity, resulting in slower but still guaranteed convergence to optimal solutions.}
\label{fig_fail}
\end{figure}

\subsection{Resilience to Time-Delay}
Next, we evaluate the resilience of our distributed resource allocation algorithm to communication delays by implementing the time-delayed dynamics in Equation~\eqref{eq_sol_delay}. Our testbed uses the same ER network configuration (with linking probability $p=20\%$) while introducing heterogeneous time-varying delays across links. We conduct experiments with maximum delay bounds $\overline{\tau} \in \{2,4,6\}$ time steps to analyze performance degradation under increasing latency conditions.

For this network topology, spectral analysis yields $\lambda_2=0.044$ and $\lambda_n=0.311$, with sector-bound parameters $\kappa_n=1$, $\kappa_l=0.938$, $\mathcal{K}_n=1$, $\mathcal{K}_l=1.062$, and smoothness parameter $u\approx 0.115$. Substituting these values into the theoretical bound provided by~\eqref{eq_tau_eta}, we determine that with step-rate $\eta=2$, convergence is guaranteed only for $\overline{\tau} \leq 2$.

This theoretical prediction is validated by our experimental results shown in Figure~\ref{fig_delay}(left), where convergence is observed for $\overline{\tau}=2$ but not for $\overline{\tau}=4$ or $\overline{\tau}=6$. Following the relationship established in~\eqref{eq_tau_eta} and discussed in Remark~\ref{rem_delay}, we can accommodate larger time delays by reducing the step-rate. When decreasing $\eta$ to $0.5$, as illustrated in Figure~\ref{fig_delay}(right), the algorithm successfully converges for all tested delay bounds $\overline{\tau} \in \{2,4,6\}$, though at a correspondingly slower convergence rate -- demonstrating the fundamental trade-off between delay tolerance and convergence speed.

\begin{figure}[]
\centering
\includegraphics[width=2.5in]{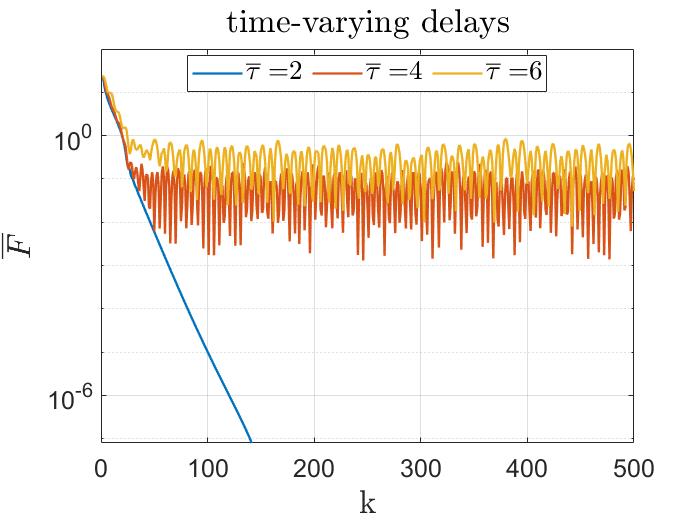}
\includegraphics[width=2.5in]{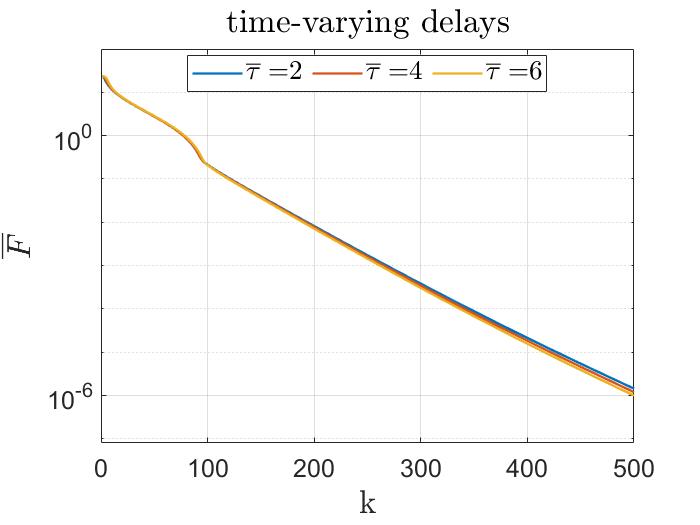}
\caption{Algorithm resilience to communication delays. Left: Convergence behaviour with step-rate $\eta=2$, showing successful convergence for delays $\overline{\tau}=2$ but divergence for larger delays. Right: Improved resilience with reduced step-rate $\eta=0.5$, achieving convergence for all tested delay bounds ($\overline{\tau}\in\{2,4,6\}$), illustrating the fundamental trade-off between delay tolerance and convergence speed.}
\label{fig_delay}
\end{figure}

\subsection{Practical Engineering Example and Comparison with the Literature}

As a practical engineering application, we demonstrate our algorithm on the distributed economic dispatch problem in power systems, where generation resources must be optimally allocated across a network of interconnected generators. This problem, extensively studied in \cite{cherukuri2016initialization, kar2012distributed}, requires distributing a total power demand of $b = 600$ MW among $n = 10$ generators while minimizing generation costs.

Each generator $i$ has a quadratic cost function that models its operational expenses described by
\begin{align}
f_i(p_i) = a_i p_i^2 + b_i p_i + c_i,
\end{align}
where $p_i$ represents the power output (in MW) of generator $i$, and coefficients $a_i$, $b_i$, and $c_i$ characterize the specific cost profile of each generator. These parameters are configured according to realistic generator data provided in~\cite{kar2012distributed}.

Practical operating constraints require that each generator's output remains within its capacity limits, represented as $m_i \leq p_i \leq M_i$, where $m_i = 20$ MW and $M_i = 110$ MW denote the minimum and maximum generation capacities, respectively. We enforce these constraints through penalty terms of the form $\sigma([p_i - M_i]^+ + [m_i - p_i]^+)$, where $[u]^+ = (\max\{0, u\})^2$ and $\sigma=40$ is a sufficiently large penalty coefficient.

We implement Algorithm~\ref{alg_ac} with both log-scale quantization for data transmission and sign-based nonlinearity to enhance convergence performance. Figure~\ref{fig_edp} illustrates the evolution of both the cost residual and the power allocations over time. For comparative analysis, we benchmark our method's convergence rate against several state-of-the-art approaches: standard linear methods \cite{cherukuri2016initialization, kar2012distributed}, acceleration-based techniques \cite{shames2011accelerated}, finite-time convergent algorithms \cite{chen2016distributed}, and fixed-time convergent solutions~\cite{ning2017distributed}.
For all these allocation techniques, the network topology is considered as an ER network with linking probability of $p=20\%$. As a comparison, although the approach proposed by \cite{shames2011accelerated} reaches faster convergence using momentum-based linear dynamics, it cannot address nonlinear mappings (e.g., due to log-scale quantization and saturation), cannot handle communication time-delays over the network, and only converges over fixed and time-invariant networks. On the other hand, our proposed approach handles nonlinear mapping and is resilient to communication time-delays and link failues by addressing convergence over uniformly-connected time-varying networks. As compared to other solutions, the convergence of our proposed solution is comparable and even better in some cases. It is worth mentioning that the convergence rate of our proposed algorithm can be improved by using nonlinear signum-based mappings at the node dynamics. 

\begin{figure}[]
	\centering
	\includegraphics[width=2.5in]{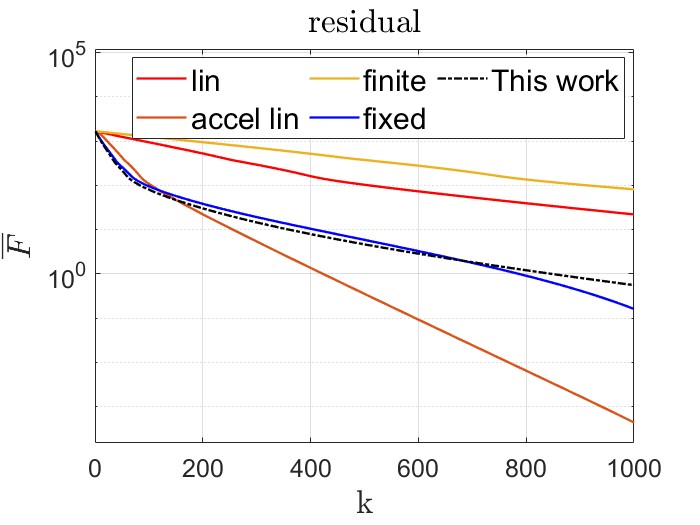}
	\includegraphics[width=2.5in]{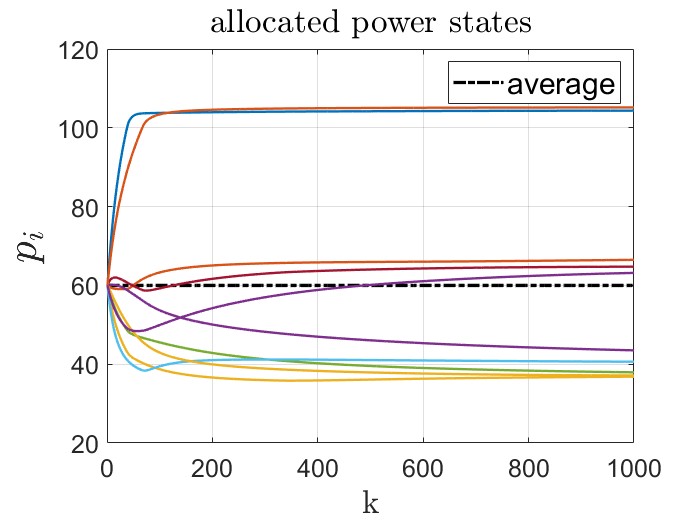}
\caption{Economic dispatch performance comparison. Left: Convergence rates of our proposed algorithm versus existing approaches (linear, accelerated linear, finite-time, and fixed-time methods) for power allocation optimization. Right: Evolution of individual generator outputs, maintaining constant sum throughout the iteration process, demonstrating strict adherence to power demand constraints at all times.}
	\label{fig_edp}
\end{figure}

To evaluate our algorithm's resilience to communication imperfections, we conduct extensive simulations under two critical network disruption scenarios: (i) stochastic link failures and (ii) heterogeneous communication delays.

For the link failure analysis, we examine the algorithm's performance under progressively deteriorating network conditions with failure probabilities $p_l \in \{50\%, 70\%, 85\%, 92\%\}$. This range strategically spans both sub-critical ($p_l < p_c$) and super-critical ($p_l > p_c$) regimes relative to the network's percolation threshold $p_c$.

Similarly, for communication delay analysis, we test our algorithm's robustness under heterogeneous time delays bounded by maximum values $\overline{\tau} \in \{3, 7, 10\}$ time steps. These delays are implemented as random, time-varying quantities that differ across communication links, accurately modelling real-world network conditions.

Figure~\ref{fig_edp2} presents the convergence behaviour under these challenging scenarios, demonstrating that our algorithm maintains convergence despite significant communication impairments. This robust performance stands in stark contrast to existing approaches in the literature \cite{cherukuri2016initialization, kar2012distributed, shames2011accelerated, chen2016distributed, ning2017distributed}, none of which provide theoretical guarantees for convergence under both time-delayed communications and uniformly-connected network topologies.

The comparative advantage of our approach is particularly evident in practical deployment scenarios where packet drops and communication delays are inevitable. While conventional algorithms fail to maintain convergence in such environments, our solution demonstrates reliable performance across varying degrees of link failures and communication delays, confirming both the theoretical resilience guarantees established in Theorems~\ref{thm_drop} and \ref{thm:delay_convergence} and the practical utility of our approach for real-world distributed systems.

\begin{figure}[]
\centering
\includegraphics[width=2.5in]{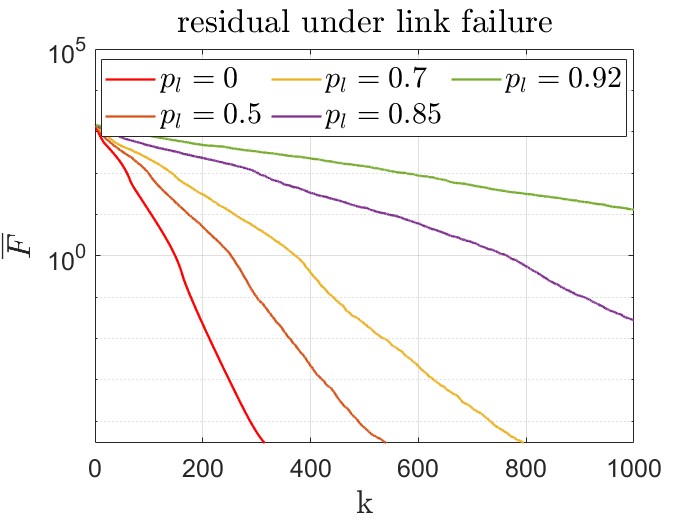}
\includegraphics[width=2.5in]{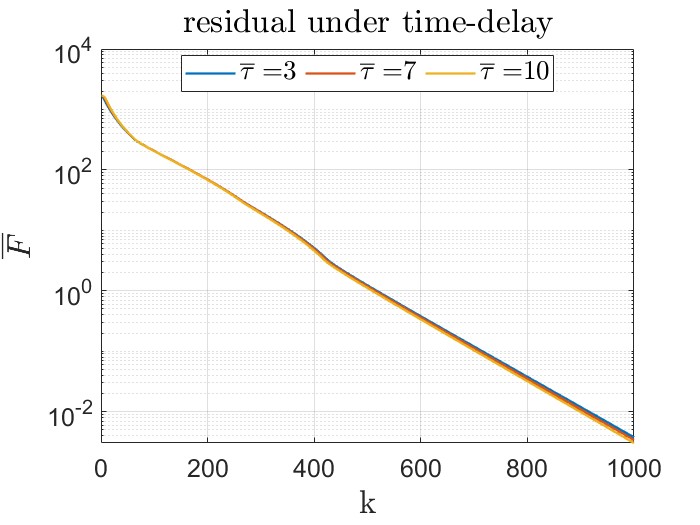}
\caption{Resilience of economic dispatch under communication imperfections. Left: Power allocation convergence under various link failure rates, demonstrating robust performance even with severe network disruptions. Right: Convergence behaviour with heterogeneous time delays, showing the algorithm's ability to maintain optimality despite communication latency in practical power systems.}
\label{fig_edp2}
\end{figure}

\section{Conclusions} \label{sec_con}
This paper introduces a novel framework for resilient distributed resource allocation that fundamentally advances the state-of-the-art in networked multi-agent systems operating under challenging communication conditions. Our approach makes three significant theoretical and practical contributions:

First, we have developed a distributed algorithm that maintains all-time feasibility for resource-demand balance -- a critical property ensuring system stability even during network disruptions or early termination. Unlike existing approaches that provide only asymptotic feasibility, our solution guarantees constraint satisfaction at every iteration.

Second, our algorithm achieves provable convergence to optimal solutions despite random link failures and heterogeneous communication delays. By leveraging percolation theory and spectral graph analysis, we established precise mathematical bounds on tolerable failure rates and delay thresholds, providing system designers with concrete guidelines for implementation in real-world networks.

Third, extensive simulations across various network configurations -- including dynamic topologies, severe link failures (up to $92\%$), and substantial communication delays -- demonstrate the algorithm's remarkable resilience in practical scenarios such as economic dispatch in power systems.

One direction of future research is to extend the results to consider DRA over directed network topologies.
The theoretical framework developed in this paper extends beyond resource allocation, providing a foundation for enhancing resilience in various distributed systems. Future research can build upon our approach to address emerging challenges in distributed machine learning (particularly federated learning under unreliable communication), multi-agent reinforcement learning, and edge computing applications where network reliability cannot be guaranteed. Additionally, our methodology could be adapted to handle adversarial attacks and strategic network disruptions, further advancing the robustness of distributed systems in security-critical applications.

\section*{Acknowledgement}
This work is funded by Semnan University, grant No. 226/1403/1403214.

\bibliographystyle{elsarticle-num-names}
\bibliography{bibliography}

\end{document}